\newtheorem{definition}{Definition}[section]
\newtheorem{lemma}[definition]{Lemma}
\newtheorem{theorem}[definition]{Theorem}
\newtheorem{example}[definition]{Example}
\newenvironment{proof}{\noindent\textbf{Proof}}{\hfill$\square$}
\newdimen\proofrulebreadth \proofrulebreadth=.05em
\newdimen\proofdotseparation \proofdotseparation=1.25ex
\newdimen\proofrulebaseline \proofrulebaseline=2ex
\let\then\relax
\def\hfi{\hskip0pt plus.0001fil}
\mathchardef\squigto="3A3B
\newif\ifinsideprooftree\insideprooftreefalse
\newif\ifonleftofproofrule\onleftofproofrulefalse
\newif\ifproofdots\proofdotsfalse
\newif\ifdoubleproof\doubleprooffalse
\let\wereinproofbit\relax
\newdimen\shortenproofleft
\newdimen\shortenproofright
\newdimen\proofbelowshift
\newbox\proofabove
\newbox\proofbelow
\newbox\proofrulename
\def\shiftproofbelow{\let\next\relax\afterassignment\setshiftproofbelow\dimen0 }
\def\shiftproofbelowneg{\def\next{\multiply\dimen0 by-1 }%
\afterassignment\setshiftproofbelow\dimen0 }
\def\setshiftproofbelow{\next\proofbelowshift=\dimen0 }
\def\setproofrulebreadth{\proofrulebreadth}
\def\prooftree{% NESTED ZERO (\ifonleftofproofrule)
%
% first find out whether we're at the left-hand end of a proof rule
\ifnum  \lastpenalty=1
\then   \unpenalty
\else   \onleftofproofrulefalse
\fi
%
% some space on left (except if we're on left, and no infinity for outermost)
\ifonleftofproofrule
\else   \ifinsideprooftree
        \then   \hskip.5em plus1fil
        \fi
\fi
%
% begin our proof tree environment
\bgroup% NESTED ONE (\proofbelow, \proofrulename, \proofabove,
%               \shortenproofleft, \shortenproofright, \proofrulebreadth)
\setbox\proofbelow=\hbox{}\setbox\proofrulename=\hbox{}%
\let\justifies\proofover\let\leadsto\proofoverdots\let\Justifies\proofoverdbl
\let\using\proofusing\let\[\prooftree
\ifinsideprooftree\let\]\endprooftree\fi
\proofdotsfalse\doubleprooffalse
\let\thickness\setproofrulebreadth
\let\shiftright\shiftproofbelow \let\shift\shiftproofbelow
\let\shiftleft\shiftproofbelowneg
\let\ifwasinsideprooftree\ifinsideprooftree
\insideprooftreetrue
%
% now begin to set the top of the rule (definitions local to it)
\setbox\proofabove=\hbox\bgroup$\displaystyle % NESTED TWO
\let\wereinproofbit\prooftree
%
% these local variables will be copied out:
\shortenproofleft=0pt \shortenproofright=0pt \proofbelowshift=0pt
%
% flags to enable inner proof tree to detect if on left:
\onleftofproofruletrue\penalty1
}
\def\eproofbit{% NESTED TWO
%
% various hacks applicable to hypothesis list 
\ifx    \wereinproofbit\prooftree
\then   \ifcase \lastpenalty
        \then   \shortenproofright=0pt  % 0: some other object, no indentation
        \or     \unpenalty\hfil         % 1: empty hypotheses, just glue
        \or     \unpenalty\unskip       % 2: just had a tree, remove glue
        \else   \shortenproofright=0pt  % eh?
        \fi
\fi
%
% pass out crucial values from scope
\global\dimen0=\shortenproofleft
\global\dimen1=\shortenproofright
\global\dimen2=\proofrulebreadth
\global\dimen3=\proofbelowshift
\global\dimen4=\proofdotseparation
\global\count255=\proofdotnumber
%
% end the box
$\egroup  % NESTED ONE
%
% restore the values
\shortenproofleft=\dimen0
\shortenproofright=\dimen1
\proofrulebreadth=\dimen2
\proofbelowshift=\dimen3
\proofdotseparation=\dimen4
\proofdotnumber=\count255
}
\def\proofover{% NESTED TWO
\eproofbit % NESTED ONE
\setbox\proofbelow=\hbox\bgroup % NESTED TWO
\let\wereinproofbit\proofover
$\displaystyle
}%
\def\proofoverdbl{% NESTED TWO
\eproofbit % NESTED ONE
\doubleprooftrue
\setbox\proofbelow=\hbox\bgroup % NESTED TWO
\let\wereinproofbit\proofoverdbl
$\displaystyle
}%
\def\proofoverdots{% NESTED TWO
\eproofbit % NESTED ONE
\proofdotstrue
\setbox\proofbelow=\hbox\bgroup % NESTED TWO
\let\wereinproofbit\proofoverdots
$\displaystyle
}%
\def\proofusing{% NESTED TWO
\eproofbit % NESTED ONE
\setbox\proofrulename=\hbox\bgroup % NESTED TWO
\let\wereinproofbit\proofusing
\kern0.3em$
}
\def\endprooftree{% NESTED TWO
\eproofbit % NESTED ONE
% \dimen0 =     length of proof rule
% \dimen1 =     indentation of conclusion wrt rule
% \dimen2 =     new \shortenproofleft, ie indentation of conclusion
% \dimen3 =     new \shortenproofright, ie
%                space on right of conclusion to end of tree
% \dimen4 =     space on right of conclusion below rule
  \dimen5 =0pt% spread of hypotheses
% \dimen6, \dimen7 = height & depth of rule
%
% length of rule needed by proof above
\dimen0=\wd\proofabove \advance\dimen0-\shortenproofleft
\advance\dimen0-\shortenproofright
%
% amount of spare space below
\dimen1=.5\dimen0 \advance\dimen1-.5\wd\proofbelow
\dimen4=\dimen1
\advance\dimen1\proofbelowshift \advance\dimen4-\proofbelowshift
%
% conclusion sticks out to left of immediate hypotheses
\ifdim  \dimen1<0pt
\then   \advance\shortenproofleft\dimen1
        \advance\dimen0-\dimen1
        \dimen1=0pt
%       now it sticks out to left of tree!
        \ifdim  \shortenproofleft<0pt
        \then   \setbox\proofabove=\hbox{%
                        \kern-\shortenproofleft\unhbox\proofabove}%
                \shortenproofleft=0pt
        \fi
\fi
%
% and to the right
\ifdim  \dimen4<0pt
\then   \advance\shortenproofright\dimen4
        \advance\dimen0-\dimen4
        \dimen4=0pt
\fi
%
% make sure enough space for label
\ifdim  \shortenproofright<\wd\proofrulename
\then   \shortenproofright=\wd\proofrulename
\fi
%
% calculate new indentations
\dimen2=\shortenproofleft \advance\dimen2 by\dimen1
\dimen3=\shortenproofright\advance\dimen3 by\dimen4
%
% make the rule or dots, with name attached
\ifproofdots
\then
        \dimen6=\shortenproofleft \advance\dimen6 .5\dimen0
        \setbox1=\vbox to\proofdotseparation{\vss\hbox{$\cdot$}\vss}%
        \setbox0=\hbox{%
                \advance\dimen6-.5\wd1
                \kern\dimen6
                $\vcenter to\proofdotnumber\proofdotseparation
                        {\leaders\box1\vfill}$%
                \unhbox\proofrulename}%
\else   \dimen6=\fontdimen22\the\textfont2 % height of maths axis
        \dimen7=\dimen6
        \advance\dimen6by.5\proofrulebreadth
        \advance\dimen7by-.5\proofrulebreadth
        \setbox0=\hbox{%
                \kern\shortenproofleft
                \ifdoubleproof
                \then   \hbox to\dimen0{%
                        $\mathsurround0pt\mathord=\mkern-6mu%
                        \cleaders\hbox{$\mkern-2mu=\mkern-2mu$}\hfill
                        \mkern-6mu\mathord=$}%
                \else   \vrule height\dimen6 depth-\dimen7 width\dimen0
                \fi
                \unhbox\proofrulename}%
        \ht0=\dimen6 \dp0=-\dimen7
\fi
%
% set up to centre outermost tree only
\let\doll\relax
\ifwasinsideprooftree
\then   \let\VBOX\vbox
\else   \ifmmode\else$\let\doll=$\fi
        \let\VBOX\vcenter
\fi
% this \vbox or \vcenter is the actual output:
\VBOX   {\baselineskip\proofrulebaseline \lineskip.2ex
        \expandafter\lineskiplimit\ifproofdots0ex\else-0.6ex\fi
        \hbox   spread\dimen5   {\hfi\unhbox\proofabove\hfi}%
        \hbox{\box0}%
        \hbox   {\kern\dimen2 \box\proofbelow}}\doll%
%
% pass new indentations out of scope
\global\dimen2=\dimen2
\global\dimen3=\dimen3
\egroup % NESTED ZERO
\ifonleftofproofrule
\then   \shortenproofleft=\dimen2
\fi
\shortenproofright=\dimen3
%
% some space on right and flag we've just made a tree
\onleftofproofrulefalse
\ifinsideprooftree
\then   \hskip.5em plus 1fil \penalty2
\fi
}
\newcommand\kleene[1]{{#1}^{*}}
\newcommand\infern[3]{\begin{prooftree} #1 \justifies #2 \using(#3) \end{prooftree}}
\newcommand\infer[2]{\begin{prooftree} #1 \justifies #2  \end{prooftree}}
\newcommand\ptos[1]{\overline{#1}} % path to string
\newcommand\card[1]{\left\vert{#1}\right\vert}
\newcommand\length[1]{\left\vert{#1}\right\vert}
\newcommand\loopnode{p_{\ell}}
\newcommand\nondetnode{p_{N}}
\newcommand\nfatrans{\twoheadrightarrow}
\newcommand\nfamor[3]{#1\mkern2mu:\mkern2mu#2\mkern2mu \nfatrans \mkern2mu #3}
\newcommand\dfamor[3]{#1\mkern2mu:\mkern2mu#2\mkern2mu 	\dfaarrow \mkern2mu #3}
\newcommand\twopaths[3]{#1\mkern2mu:\mkern2mu#2\mkern2mu 	\rightrightarrows \mkern2mu #3}
\newcommand\omor[3]{#1\mkern2mu:\mkern2mu#2\mkern2mu 	\mapsto \mkern2mu #3}
\newcommand\Acc{\mathbf{Acc}} % set of accepting states of NFA N
\newcommand\nondeta{p_{N1}} % first choice after nondet node
\newcommand\nondetb{p_{N2}} % second choice after nondet node
\newcommand\startnode{p_{0}}
\newcommand\nfast{NFA node}
\newcommand\machname{backtracking machine}
\newcommand\mstat[1]{\Phi_{#1}} % multistates in the analysis
\newcommand\pseq[1]{\beta_{#1}} % sequences of pointers p
\newcommand\piseqs[1]{\sigma_{#1}} % sequences of pointers index pairs (p,j)
\newcommand\empseq{\varepsilon}
\newcommand\reachpre{\mathcal R} % reachable for prefix analysis
\newcommand\elems[1]{\mathsf{Set}(#1)}
\newcommand\ordtrans{\mathrel{\Rsh}} % ordered transition
\newcommand\pump[1]{y_{#1}}
\newcommand\failstate{\mstat{\mathrm{fail}}}
\newcommand\dfaarrow{\Rrightarrow}
\newcommand\cstep[1]{\stackrel{#1}{\rightsquigarrow}} % counted machine steps
\newcommand\treemor[3]{#1 : #2 \mathrel{\triangle} #3} % tree morphism
\newcommand\prestatone{\mstat{x}} % prefix first set from ordered
\newcommand\pumpstatone{\mstat{y1}} % pumping first set from before branch
\newcommand\pumpstattwo{\mstat{y1a}} % pumping second set after branch
\newcommand\pumpstatthree{\mstat{y2}} % pumping third set after joining
\newcommand\suffstat{\mstat{\mathrm{fail}}} % failure state
\newcommand\twoarrow{\mathrel{\twoheadrightarrow}_{\mathcal{\mathbf 2}} }
\newcommand\threearrow{\mathrel{\twoheadrightarrow}_{\mathcal{\mathbf 3}} }
\newcommand\twomor[3]{#1 : #2 \twoarrow #3} % two parallel transition
\newcommand\threemor[3]{#1 : #2 \threearrow #3} % two parallel transition
\newcommand\glab[1]{\ptos #1}
\newcommand\gcod[1]{\mathrm{cod}(#1)}
\newcommand\gdom[1]{\mathrm{dom}(#1)}
\newcommand\gnodes[1]{\mathsf{nodes}(#1)}
\newcommand\gshared[1]{\mathcal{S}(#1)}
\newcommand\gfree[1]{\mathcal{F}(#1)}
\newcommand\empstring{\varepsilon} % empty string of symbols
\newcommand\mklist[1]{[#1]} % used to indicate a list of items
\newcommand\emplist{\mklist{\,}} % empty list
\begin{document}

\title{Static Analysis for Regular Expression Exponential Runtime via Substructural Logics}

\author[bham]{Asiri Rathnayake}
\ead{apr015@cs.bham.ac.uk}

\author[bham]{Hayo Thielecke}
\ead{h.thielecke@cs.bham.ac.uk}

\address[bham]{University of Birmingham, Birmingham B15 2TT, United Kingdom}
%\cortext[c1]{Corresponding author}
%\cortext[c2]{Principal corresponding author}

\begin{abstract}
Regular expression matching using backtracking can have exponential runtime, leading to an algorithmic complexity attack known as REDoS in the systems security literature. In this paper, we present a static analysis that detects whether a given regular expression can have exponential runtime for some inputs. The analysis works by forming powers and products of transition relations and thereby reducing the REDoS problem to reachability. The correctness of the analysis is proved using a substructural calculus of search trees, where the branching of the tree causing exponential blowup is characterized as a form of non-linearity.
\end{abstract}

\maketitle

\section{Introduction}
Regular expressions are everywhere. Yet the backtracking virtual machines that are used to match them (in Java, .NET and other frameworks) are very different from the DFA (deterministic finite automata) construction used in compiling. Whereas DFAs run in linear time but may be expensive to construct, backtracking matchers have low initial cost, but may have exponential runtime for some inputs~\cite{2007_regex_cox}. This is a problem when such matchers may be exposed to malicious input, say over a network, as an attacker could craft an input in order for the matcher to take exponential time. This problem is known as REDoS, short for Regular Expression Denial-of-Service.

For a straightforward example of an exponential blowup, consider the following regular expression:
\[
\kleene{(a \mid b \mid ab)}c
\]
Matching this expression against input strings of the form $(ab)^n$ leads the Java virtual machine to a halt for very moderate values of $n$ ($\sim 50$) on a contemporary computer. Certain other backtracking matchers like the PCRE library and the matcher available in the .NET platform seem to handle this particular example well. However, the ad-hoc nature of the workarounds implemented in these frameworks are easily exposed with a slightly complicated expression / input combination:
\[
\kleene{(a \mid b \mid ab)}bc
\]
This expression, when matched against input strings of the form $(ab)^nac$, leads to exponential blowups on all the three matchers mentioned.

The REDoS analysis builds on the idea of nondeterministic Kleene expressions. When matching the input string $ab$ against the Kleene expression $\kleene{(a \mid b \mid ab)}$, a match could be found by taking either of the two different paths through the corresponding NFA (nondeterminstic finite automata). If we repeat this string to form $abab$, now there are four different paths through the NFA; this process quickly builds up to an exponential amount of paths through the NFA as the \textit{pumpable string} $ab$ is repeated.
\begin{definition}\label{def_pumpable}
For a given regular expression, a pumpable string is a substring $y$ of an input string of the form $xy^nz$. The pumpable part $y$ may be repeated as many times as needed ($n$) to form larger input strings while maintaining an exponential growth of search space in a backtracking matcher.
\end{definition}
A matcher based on DFAs would not face a difficulty in dealing with such expressions since the DFA construction eliminates such redundant paths. However, these expressions can be fatal for backtracking matchers based on NFAs, as their operation depends on performing a depth-first traversal of the entire search space.

We think of the various phases of the analysis as very simple and non-standard logics for judgements for different implications of the form:
\[
w : p_{1} \to p_{2}
\] 
Here $p_{1} \to p_{2}$ is a proposition and $w$ is a proof of it, which we will call its realizer. In this way, we can focus on what the analysis tries to construct, not how. Hence the analysis can be seen as a form of proof search, and it is implemented via straightforward closure algorithms.
  
A second use of logic or type theory in this paper comes in when proving the soundness of the analysis, when we need to show that the constructed string really leads to exponential runtime. While the \machname{} that we use as an idealization of backtracking matchers (like those in the Java platform) is not very complicated, it is not straightforward to reason about how it behaves on some constructed malicious input. This is because  the machine traverses the search tree in a depth-first strategy, whereas the attack string is best understood in terms of a  composition of horizontal slices of the search tree. To reason compositionally, we first introduce a calculus of search trees, inspired by substructural logics. In a nutshell, the existence of a pumpable string as part of a REDoS vulnerability amounts to the existence of a non-linear derivation in the search tree logic, essentially as in a derivation of this form:
\[
\infer{p}{p,p}
\]
Thus we can reason about the exponential growth of the search tree in a compositional, logical style, separate of the search strategy of the backtracking matcher. The exponential runtime of the machine then follows due to the fact that the runtime is at least the width of the search tree if the matcher is forced to explore the whole tree.

%\subsection*{Contributions}
%This paper makes the following contributions:
%\begin{itemize}
%\item A program analysis for REDoS is designed and implemented that corrects limitations of an earlier more naive analysis.
%\item The soundness and the completeness of the analysis are proved.
%\item The proof technique uses a substructural logic of search trees that captures the non-linearity leading to exponential blowup.
%\item The analysis has been implemented in OCaml and is available as an electronic appendix to this paper at \url{http://www.cs.bham.ac.uk/~hxt/research/rxxr2/}.
%\end{itemize}
 
\subsection*{Outline of the paper}
\Cref{background} presents some required background on regular expression matching in a form that will be convenient for our purpose. We then define the three phases (prefix, pumping, and suffix construction) of our REDoS analysis in \Cref{analysis} and validate it on some examples in \Cref{examples}. \Cref{soundness} and \Cref{completeness} prove the soundness and the completeness of the analysis using a substructural calculus of search trees. \Cref{implementation} presents a brief overview of the OCaml implementation of the analysis and
% in \Cref{results} we discuss 
the practical performance of our tool. We conclude with a discussion of related work in \Cref{related} and directions for further work in \Cref{further}.

\begin{figure}[tp]
\begin{center}
\begin{tabular}{ll}
\hline
\hline
$a$, $b$, $c$& input symbols\\
$w$, $x$, $y$, $z$\;\;& strings of input symbols\\
$p_{},q$& NFA nodes or states\\
$\pseq{}$, $\theta$& ordered sequences $p_{1}\ldots p_{n}$ of NFA states\\
$\mstat{}$& sets $\{p_{1},\ldots, p_{n}\}$ of NFA states\\
$\piseqs{}$& sequences of state/index pairs $(p,j)$\\
$\varepsilon$& empty word or sequence\\
$\nfatrans$& NFA transition relation (Definition~\ref{nfarelation})\\
$\mapsto$& ordered NFA transition function (Definition~\ref{defnfa})\\
$\ordtrans$& ordered multistate transition function (Definition~\ref{orderedtrans})\\
$\dfaarrow$& multistate transition function (Definition~\ref{powerdfa})\\
$\rightsquigarrow$& transitions of \machname{} (Definition~\ref{defbacktrack})\\
\hline
\hline
\end{tabular}
\caption{Notational conventions}
\label{notation}
\end{center}
\end{figure}

\section{Basic constructs}\label{background}
This section presents some background material that will be needed for the analysis, such as nondeterministic automata. \Cref{notation} gives an overview of notation. We assume that the regular expression has been converted into an automaton following one of the standard constructions. 

\subsection{The ordered NFA} \label{nfa}
The usual text-book definitions of NFAs do not impose any ordering on the transition function. For an example, a traditional NFA for the regular expression $a(bc \mid bd)$ would not prioritize any of the two transitions available for character $b$ over the other. Since backtracking matchers follow a greedy left-to-right evaluation of alternations, the alternation operator effectively becomes non-commutative in their semantics for regular expressions. Capturing this aspect in the analysis requires a specialized definition of NFAs.

If we are only concerned about acceptance, Kleene star is idempotent and alternation is commutative. If we are interested in exponential runtime, they are not. The non-commutativity of alternation is not that surprising in terms of programming language semantics, as Boolean operators like \verb|&&| in C or \texttt{andalso} in ML have a similar semantics: first the left alternative is evaluated, and if that does not evaluate to true, the right alternative is evaluated.
Since in our tool the NFA is constructed from the syntax tree, the order is already available in the data structures. The children of a \nfast{} have a left-to-right ordering.

\begin{definition}[Ordered NFA]
\label{defnfa}
An ordered NFA $\mathcal N$ consists of a set of states, an initial state $\startnode$, a set of accepting states $\Acc$ and for each input symbol $a$ a transition function from states to sequences of states. We write this function as 
\[
a : p \mapsto q_{1}\ldots q_{n}
\]
For each input symbol $a$ and current NFA state $p$, we have a sequence of successor states $q_{i}$. The order is significant, as it determines the order of backtracking.
\end{definition}

In the textbook definition of an $\varepsilon$-free NFA, the NFA  has a transition function $\delta$ of type
\[
\delta : (Q\times\Sigma) \to 2^{Q}
\]
where $Q$ is the set of states and $\Sigma$ the set of input symbols. Here we have imposed an order on the sets in the image of the function, replacing $2^{Q}$ by $Q^{*}$, curried the function, and swapped the order of $Q$ and $\Sigma$.
\[
\begin{array}{rcccc}
\Sigma &\to&  (Q &\to& Q^{*})
\\
a &\mapsto& p & \mapsto & q_{1}\ldots q_{n}
\end{array}
\]

\begin{definition} \label{nfarelation}
The nondeterministic transition relation of the NFA is given by the following inference:
\[
\infer{
a : p \mapsto q_{1}\ldots q_{n}
}{
\nfamor a p{q_{i}}
}
\]
\end{definition}
Note however, that we cannot recover the ordering of the successor states $q_{i}$ from the nondeterministic transition relation. In this regard, the NFA on which the matcher is based has a little extra structure compared to the standard definition of NFA in automata theory.
If we know that
\[
\nfamor a{p}{q_{1}}
\mbox{ and }
\nfamor a{p}{q_{2}}
\]
we cannot decide whether the ordered transition is
\[
\omor a{p}{q_{1}\,q_{2}}
\mbox{ or }
\omor a{p}{q_{2}\,q_{1}}
\]

\begin{definition} \label{nfarelation2}
The transition relation $\nfatrans$ is extended to strings with the following inference rules:
\[
\infer{}{\nfamor{\empstring}{p}{p}} \qquad \infer{\nfamor{w}{p}{q} \quad \nfamor{a}{q}{q'}}{\nfamor{wa}{p}{q'}}
\]
\end{definition}

We define ordered multistates $\pseq{}$ as finite sequences of NFA states $p$, where the order is significant. In constrast, multistates $\mstat{}$ represent sets of NFA states, so they can be represented as lists, but are identified up to reordering. Each ordered multistate $\pseq{}$ can be turned into a multistate given by the set of its elements. We write this set as $\elems{\pseq{}}$. If 
\[
\pseq{} = p_{1}\ldots p_{n}
\]
then
\[
\elems{\pseq{}} = \{ p_{1},\ldots, p_{n} \}
\]
The difference between ${\pseq{}}$ and $\elems{\pseq{}}$ may appear small, but the notion of equality for sets is less fine-grained than for sequences, which has an impact on the search space that the analysis has to explore.

The transition relation $\ordtrans$ on ordered multistates is defined in the following manner. Let $\ggg$ be an operator on ordered multistates:
\begin{eqnarray*}
\emplist &\ggg& \emplist\\
p\,\pseq{} &\ggg& p\,\pseq{}' \qquad \mbox{if } (\not\exists \pseq{1}, \pseq{2} \,.\, \pseq{} = \pseq{1}\,p\,\pseq{2}) \land \pseq{} \ggg \pseq{}'\\
p\,\pseq{} &\ggg& \pseq{}' \qquad \mbox{if } (\exists \pseq{1}, \pseq{2} \,.\, \pseq{} = \pseq{1}\,p\,\pseq{2}) \land p\,\pseq{1}\,\pseq{2} \ggg \pseq{}'
\end{eqnarray*}
The operator $\ggg$ removes all but the leftmost occurrences in sequences. Note that termination is guaranteed since $\ggg$ is applied on shorter sequences on the R.H.S. Moreover, in each reduced sequence each $p$ can appear at most once, so there are only finitely many sequences that can be reached for any NFA.
\begin{definition}
\label{orderedtrans} 
The transition relation $\ordtrans$ for ordered multistates is defined as:
\[
\infer{\pseq{1} = (p_{1}\ldots p_{n}) \qquad a : p_{i} \mapsto \theta_{i} \qquad (\theta_{1}\ldots \theta_{n}) \ggg \pseq{2}}
{a : \pseq{1} \ordtrans \pseq{2}}
\]
The $\ordtrans$ transition is extended to strings with the rules:
\[
\infer{}{\varepsilon : \pseq{} \ordtrans \pseq{}}
\qquad
\infer{w: \pseq 1 \ordtrans \pseq 2 \qquad a: \pseq 2 \ordtrans \pseq 3}
{(w\,a): \pseq 1 \ordtrans \pseq 3}
\]
\end{definition}
The transition relation $\ordtrans$ can be thought of as an ordered variant of the traditional DFA transition relation encountered in powerset construction:
\begin{definition}[DFA Transition $\dfaarrow$] \label{powerdfa}
The transition relation $\dfaarrow$ is defined as
\[
w : \mstat 1 \dfaarrow \mstat 2
\]
if and only if 
\[
\mstat 2 = 
\{
p_{2} \mid \exists p_{1} \in \mstat 1.\nfamor w{p_{1}}{p_{2}}
\}
\]
\end{definition}
 
\subsection{The backtracking abstract machine}\label{secmathcer}
The analysis assumes exact matching semantics of regular expressions. Given regular expression $e$ and the input string $w$, the matcher is required to find a match of the entire string, as opposed to a sub-string. Most practical matchers search for a sub-match by default. However, such behavior can be modeled in exact matching semantics by augmenting the regular expression with match-all constructs at either end of the expression, as in $(\kleene{.}e\kleene{.})$. Practical implementations offer special ``anchoring" constructs that allow regular expression authors to enforce exact matching semantics. For an example, expressions of the form $(\text{\textasciicircum}e\text{\$})$ require them to be matched against the entire input string.

While the theoretical formulation of our analysis assumes exact matching semantics (thus avoiding unnecessary clutter), our implementation assumes sub-match semantics, since it is more useful in practice. The translation between the two semantics is quite straightforward.

\begin{definition}[Backtracking abstract machine]
\label{defbacktrack}
Given an ordered NFA, the \machname{} is defined  as follows. We assume an input string $w$ as given. Machine transitions may depend on $w$, but it does not change during transitions, so that we do not explicitly list it as part of the machine state. The input symbol at position $j$ in $w$ is written as $w[j]$ ($j$ is 0-based).
\begin{itemize}
\item States of the \machname{} are finite sequences of the form
\[
\piseqs{} = (p_{0}, j_{0}) \ldots (p_{n}, j_{n})
\]
where each of the $p_{i}$ is an NFA state and each of the $j_{i}$ is an index into the current input string. We refer to individual $(p, i)$ pairs as frames (as in stack frames).
\item The initial state of the machine is the sequence of length 1 containing the frame:
\[
(\startnode,0)
\]
\item The machine has matching transitions, which are inferred from the transition function of the ordered NFA as follows:
\[
\infer{
w[j]=a
\qquad
a: p \mapsto q_{1}\ldots q_{n}
}{
w \Vdash
(p,j)\,\piseqs{}
\rightsquigarrow
(q_{1},j+1)\ldots (q_{n},j+1)\,\piseqs{}
}
\]
\item The machine has failing transitions, of the form:
\[
(p,j)\;\piseqs{} 
\rightsquigarrow
\piseqs{} 
\]
where there is no transition of the form $w[j]: p \mapsto q_{1}\ldots q_{n}$ \textit{OR} $j$ is the length of $w$ and $p\notin\Acc$. \label{fix_bam_failtrans}
\item
Accepting states are of the form:
\[
w \Vdash
(p,j)\;\piseqs{}
\]
where
$p\in \Acc$
and
$j$ is the length of $w$.
\item Transition sequences in $n (n > 0)$ steps are written as $\cstep n$ and inferred using the following rules: \label{fix_bam_ntrans}
\[
\infer{w \Vdash \piseqs{} \cstep{} \piseqs{}'}{w \Vdash \piseqs{} \cstep{1} \piseqs{}'} 
\qquad
\infer{w \Vdash \piseqs{1} \cstep n \piseqs{2} \qquad w \Vdash \piseqs{2} \cstep m \piseqs{3}}
{w \Vdash \piseqs{1} \cstep{n + m} \piseqs{3}} 
\]
We write  $w \Vdash
\piseqs{1} \cstep *\piseqs{2}$  for $\exists n.w \Vdash
(\piseqs{1} \cstep n\piseqs{2})$. 
\item
Final states are either accepting or the empty sequence.
\end{itemize}
\end{definition}

The state of the \machname{} is a stack that implements failure continuations. When the state is of the form $(p,j)\piseqs{}$, the machine is currently trying to match the symbol at position $j$ in state $p$. Should this match fail, it will pop the stack and proceed with the failure continuation $\piseqs{}$.

\begin{lemma}
\label{lem_backtrack_extend}
For any backtracking machine run (i.e. a transition sequence):
\[
w \Vdash \piseqs{} \cstep{n} \piseqs{}'
\]
And for any $\bar{\piseqs{}}$, the following run also exists:
\[
w \Vdash \piseqs{}\,\bar{\piseqs{}} \cstep{n} \piseqs{}'\,\bar{\piseqs{}}
\]
\end{lemma}
\begin{proof}
Observe that each transition taken by the the first machine can be simulated on the extended machine. Moreover, each transition of the extended machine leaves the additional $\bar{\piseqs{}}$ untouched. 
\end{proof}

The backtracking machine definition leaves a lot of leeway to the implementation. Implementation details are abstracted in the ordered transition relation. The most important choice in the definition is that the machine performs a depth-first traversal of the search tree. In principle, a backtracking matcher could also use breadth-first search. In that case, our REDoS analysis would not be applicable, and such matchers may avoid exponential run-time. However, the space requirements of breadth-first search are arguably prohibitive. A more credible alternative to backtracking matchers is Thompson's matcher~\cite{1968_thompson,2007_regex_cox,2009_regex_cox}, which is immune to REDoS. However the relative inflexibility of the lockstep algorithm (when supporting extended, non-regular pattern matching constructs) has made it less popular among practical regular expression libraries. The REDoS problem in the backtracking paradigm therefore remains quite significant.

\section{The REDoS analysis} \label{analysis}
At a high-level, the analysis can be thought to work by exploring all the Kleene sub-expressions within an input expression and then analysing each of those sub-expressions for an exponential vulnerability. Given a regular expression input, for each Kleene sub-expression $\kleene{e_2}$, there exists a prefix expression $e_1$ and a suffix expression $e_3$; the prefix expression is what a backtracking matcher has to match in order to reach the Kleene expression, the suffix expression is what a backtracking matcher has to match in order to complete an overall match. For each such decomposition of the form $e_1\kleene{e_2}e_3$, the analysis attempts to derive an attack string of the form:
\[
xy^nz
\]
The presence of a pumpable string $y$ signals the analyser that a corresponding prefix $x$ and a suffix $z$ need to be derived in order to form the final attack string configuration. The requirements on the different segments of the attack string are as follows:
\begin{eqnarray*}
x &:& x \in L(e_1) \label{nssprefix}\\
y &:& y \in L(\kleene{e_2}) \quad \text{(with $b > 1$ paths)} \label{nsspumpable}\\
z &:& xy^nz \not\in L(e_1\kleene{e_2}e_3) \label{nsssuffix}
\end{eqnarray*}
Intuitively, the prefix $x$ leads a backtracking matcher to a point where it has to match the (vulnerable) Kleene expression $\kleene{e_2}$. At this point the matcher is presented with $n$ ($n > 0)$ copies of the pumpable string $y$, increasing the search space of the matcher to the order of $b^n$. At the end of each of the search attempts (paths through the NFA), the suffix $z$ causes the matcher to backtrack, forcing an exploration of the entire search space. 

\subsection{The phases of the REDoS analysis}
Overall, the REDoS analysis of a node $\loopnode$ (loop node) consists of three phases. The phases all work by incrementally exploring a transition relation. These relations are the DFA transition relation $\dfaarrow$ (Definition~\ref{powerdfa}) and its ordered variant $\ordtrans$ (Definition~\ref{orderedtrans}). The three analysis phases construct a REDoS prefix $x$, a pumpable string $\pump{}$ and a REDoS suffix $z$:
\begin{equation*}
\begin{array}{rcl}
\text{Prefix analysis} &\;& \left\{
\begin{array}{rcll}
x &:& \startnode \ordtrans (\pseq{}\;\loopnode\;\pseq{}')\\[1em]
\end{array} \right.\\
\text{Pumpable analysis} &\;& \left\{
\begin{array}{rcll}
\pump{1} &:& \prestatone \dfaarrow \pumpstatone
& \mbox{where }\prestatone =  \elems{\pseq{}\,\loopnode}\\[1ex]
{a} &:& \pumpstatone \dfaarrow \pumpstattwo\\[1ex]
\pump{2} &:& \pumpstattwo \dfaarrow \pumpstatthree
& \mbox{where } \pumpstatthree \subseteq \prestatone%\\[1em]
\end{array} \right.\\
\text{Suffix analysis} &\;& \left\{
\begin{array}{rcll}\\
z  &:& \pumpstatthree \dfaarrow {\failstate}
& \mbox{where } \failstate \cap\Acc=\emptyset%\\[1em] 
\end{array} \right.
\end{array}
\end{equation*}

\begin{figure}[tp]
\begin{center}
\begin{tikzpicture}[level distance=3em]
\newcommand\failpath{ edge from parent[->] node[left] {$z$} }
\newcommand\ffail{\textrm{fail}}
\tikzstyle{level 1}=[sibling distance=25ex]
\tikzstyle{level 2}=[sibling distance=10ex]
\node {$p_{0}  $}
  child {
    node {$p_{1}$}
      child {
    node {$p_{2}$}
          child {
    node {$p_{3}$}
              child {
    node {$\ffail$}
                  edge from parent[->] node[left] {$z$}
    }
              edge from parent[->] node[left] {$\pump{}$}
    }
          edge from parent[->] node[left] {$\pump{}$}
    }
                  edge from parent[->] node[above] {$x$}
    }
  child {
  node {$\loopnode$}
    child {
      node {$\loopnode$}
    child {
      node {$\loopnode$}
        child {node {$\ffail$} \failpath } 
      edge from parent[->] node[left] {$\pump{}$}
    }
  child {
      node {$\loopnode$}
        child {node {$\ffail$} \failpath } 
      edge from parent[->] node[left] {$\pump{}$}
    }
      edge from parent[->] node[left] {$\pump{}$}
    }     
   child[missing] { }       
  child {
      node {$\loopnode$}
    child {
      node {$\loopnode$}
        child {node {$\ffail$} \failpath } 
      edge from parent[->] node[left] {$\pump{}$}
    }
  child {
      node {$\loopnode$}
        child {node {$\ffail$} \failpath } 
      edge from parent[->] node[left] {$\pump{}$}
    }
      edge from parent[->] node[left] {$\pump{}$}
    }
        edge from parent[->] node[auto] {$x$}
  } 
    child {
    node {$p_{\mathrm{Right}}$}
            edge from parent[->] node[above] {$x$}
    }
;
\end{tikzpicture}
\end{center}
\caption{Branching search tree with left context for $x\,\pump{}\,\pump\,z$}
\label{figtree}
\end{figure}

\subsection{Prefix analysis}
The analysis needs to find a string that causes the matcher to reach $\loopnode$. However, due to the nondeterminism of the underlying NFA, it is not enough to check reachability. The same string $x$ could also lead to some other states before $\loopnode$ is reached by the matcher. If one of these states could lead to acceptance, the matcher will terminate successfully, and $\loopnode$ will never be reached. In this case, there is no vulnerability, regardless of any exponential blowup in the subtree under $\loopnode$.
See \Cref{figtree}.

The REDoS prefix analysis computes all ordered multistates $\pseq{}$ reachable from $p_0$, together with a realizer $w$, using the following rules:
%\begin{figure}[tp]
\[
\infer{}{(\varepsilon,p_0) \in \reachpre}
\quad
\infer{(w,\pseq{1}) \in \reachpre \quad a : \pseq 1 \ordtrans \pseq 2 \quad \not\exists w' . (w', \pseq{2}) \in \reachpre}
{\reachpre := \reachpre \cup \{(wa, \pseq{2})\}}
\] \label{fix_reachpre}
In the implementation, we keep a set $\reachpre$. It is initialized to $(\varepsilon,\startnode)$. We then repeatedly check if there is a $(w,\pseq{1})$ in the set such that for some $a$ there is a transition $a : \pseq 1 \ordtrans \pseq 2$. If there is, we add $(w\,a,\pseq 2)$ to $\reachpre$ and repeat the process. We terminate when no new $\pseq 2$ has been found in the last iteration. Finally, the analysis isolates $(w, \pseq{})$ pairs of the form $(x, \pseq{}\;\loopnode\;\pseq{}')$ and takes $\prestatone$ as $\elems{\pseq{}\;\loopnode}$ for each such pair.

\subsection{Pumping analysis}
\label{pumpinganalysis}

\begin{definition}
\label{defbranchpoint}
A branch point is a tuple:
\[
(\nondetnode, a, \{\nondeta, \nondetb\})
\]
Where $\exists \pseq{}, \pseq{}', \pseq{}'' \;.\; a : \nondetnode \mapsto \pseq{}\nondeta\pseq{}'\nondetb\pseq{}''$. We call $\nondetnode$ a nondeterministic node, in that, there are multiple successor nodes for $\nondetnode$ on the same input symbol $a$.
\end{definition}
For example, if $\nondetnode$ has three successor nodes $p_{1}$, $p_{2}$ and $p_{3}$ for the same input symbol $a$, there are three different branch points:  
\begin{align*}
(\nondetnode, a, \{p_{1}, p_{2}\})
\\
(\nondetnode, a, \{p_{1}, p_{3}\})
\\
(\nondetnode, a, \{p_{2}, p_{3}\})
\end{align*}
There can be only finitely many nondeterministic nodes in the given NFA. For each of them, we need to solve a reachability problem.

The pumping analysis can be visualized with the diagram in \Cref{pumpingmaymust}. The analysis aims to find two different paths leading from $\loopnode$ to itself. Such paths must at some point include a nondeterministic node $\nondetnode$ that has at least two transitions to different nodes $\nondeta$ and $\nondetb$ for the same symbol $a$. For such a node to lie on a path from $\loopnode$ to itself, there must be some path labeled $\pump 1$ leading from $\loopnode$ to $\nondetnode$, and moreover there must be paths from the two child nodes $\nondeta$ and $\nondetb$ leading back to $\loopnode$, such that both these paths have the label $\pump 2$. The left side of \Cref{pumpingmaymust} depicts this situation.

So far we have only considered what states \emph{may} be reached. Due to the nondeterminism of the transition relation $\nfamor apq$, there may be other states that can be reached for the same strings $\pump 1$ and $\pump 2$. Therefore, we also need to perform a \emph{must} analysis that keeps track of all states reachable via the strings we construct. This analysis uses the transition relation $\dfaarrow$ of the power DFA between sets of NFA states. In \Cref{pumpingmaymust}, it is shown on the right-hand side.

Intuitively, we run the two transition relations in parallel on the same input string. More formally, this involves constructing a product of two relations. Before we reach the branching point, we run the relations $\nfatrans$ and $\dfaarrow$ in parallel. After the nondeterministic node $\nondetnode$ has produced two different successors, we need to run two copies of $\nfatrans$ in parallel with $\dfaarrow$. One may visualize this situation by reading the diagram in \Cref{pumpingmaymust} horizontally: above the splitting at $\nondetnode$, there are two arrows in parallel for $\pump 1$, whereas below that node, there are three arrows in parallel for $a$ and $\pump 2$.

The twofold transition relation  $\twoarrow$ for running $\nfatrans$ in parallel with $\dfaarrow$ is given by the rules in \Cref{productNP}. Analogously, the threefold product transition relation $\threearrow$ for running two copies of $\nfatrans$ in parallel with $\dfaarrow$ is given by the rules in 
\Cref{productNNP}. 

%The state space of the product NFA is finite.  Hence the pumping analysis can construct all  pairs of states reachable from the two successor nodes in a branching point. It can then check whether both components of any of the pairs are equal to $\loopnode$. 

In summary, the pumping analysis consists of two phases:
\begin{enumerate}
\item Given $\loopnode$ and $\prestatone$, the analysis searches for a realizer $\pump{1}$ for reaching some nondeterministic node $\nondetnode$:
\[
\twomor{\pump 1}{(\loopnode,\prestatone)}{(\nondetnode,\pumpstatone)}
\]
\item Given the successor nodes $\nondeta$ and $\nondetb$ of some $\nondetnode$ node, the analysis searches for a realizer $\pump 2$ for reaching $\loopnode$:
\[
\threemor{\pump 2}{(\nondeta,\nondetb,\pumpstattwo)}{(\loopnode,\loopnode,\pumpstatthree)}
\]
Moreover, the analysis checks that the constructed state $\pumpstatthree$ satisfies the inclusion:
\[
\pumpstatthree \subseteq \prestatone
\]
\end{enumerate}

\label{def_pumpable_nfa}If both phases of the analysis succeed, the string $\pump 1\,a\,\pump 2$ is returned as the pumpable string, together with the state $\pumpstatthree$. Moreover, we call the string $\pump 1\,a\,\pump 2$ to be a pumpable string \textit{for} the node $\loopnode$, an NFA containing such a node is then called a pumpable NFA.

\begin{figure}
\begin{center}
\begin{tikzpicture}
\matrix (m) [matrix of math nodes, row sep=2em, column sep=2em] 
{ 
{} & \loopnode &  {} &  {} & \prestatone
\\ 
{} & \nondetnode & {} & {} & \pumpstatone
\\
 \nondeta & {} & \nondetb & 
 & \pumpstattwo
\\
{} & \loopnode & {}  & {} & \pumpstatthree
%\\ {} & \loopnode &  {} & {} & \mstat 5
\\
}; 
\path[->>,font=\normalsize] 
(m-1-2) edge node[auto] {$\pump{1}$} (m-2-2)
(m-2-2) edge node[above] {$a$} (m-3-1)
(m-2-2) edge node[above] {$a$} (m-3-3)
(m-3-1) edge node[below] {$\pump{2}\ $} (m-4-2)
(m-3-3) edge node[auto] {$\pump{2}$} (m-4-2)
%(m-4-2) edge node[auto] {$\pump{3}$} (m-5-2)
; 
\path[->,ultra thick,font=\normalsize]
(m-1-5) edge node[auto] {$\pump{1}$} (m-2-5)
(m-2-5) edge node[auto] {$a$} (m-3-5)
(m-3-5) edge node[auto] {$\pump{2}\ $} (m-4-5)
% (m-4-5) edge node[auto] {$\pump{3}$} (m-5-5)
; 
\end{tikzpicture}
\end{center}
\caption{Pumping analysis construction of $\pump 1\,a\,\pump 2$: ``may'' on the left using $\twoheadrightarrow$, and ``must''  on the right using $\Rrightarrow$ }
\label{pumpingmaymust}
\end{figure}

\begin{figure}[tp] %%%%%%%%%%%%
\[
\infer{
\twomor w{(p_{1},\mstat 1)}{(p_{2},\mstat 2)}
\qquad
\begin{array}{rl}
\nfamor b{&p_{2}}{p_{3}}
\\[1ex]
\dfamor b{&\mstat 2}{\mstat 3}
\\[1ex]
\end{array}
}{
\twomor {(w\,b)}{(p_{1},\mstat 1)}{(p_{3},\mstat 3)}
}
\]
\vspace{1em}
\[
\infer{}
{
\twomor {\varepsilon}{(p_{},\mstat{})}{(p,\mstat{})}
}
\]
\caption{The twofold product transition relation $\twoarrow$}
\label{productNP}
\end{figure}

\begin{figure}[tp] %%%%%%%%%%%%
\[
\infer{
\threemor w{(p_{1},p'_{1},\mstat 1)}{(p_{2},p'_{2},\mstat 2)}
\qquad
\begin{array}{lr}
\nfamor b{&p_{2}}{p_{3}}
\\[1ex]
\nfamor b{&p'_{2}}{p'_{3}}
\\[1ex]
\dfamor b{&\mstat 2}{\mstat 3}
\end{array}
}{
\threemor {(w\,b)}{(p_{1},p'_{1},\mstat 1)}{(p_{3},p'_{3},\mstat 3)}
}
\]
\vspace{1em}
\[
\infer{}
{
\threemor {\varepsilon}{(p_{},p'_{},\mstat{})}{(p,p',\mstat{})}
}
\]
\caption{The threefold product transition relation $\threearrow$}
\label{productNNP}
\end{figure}

%\begin{figure*}[tp] %%%%%%%%%%%%
%\[
%\infer{
%\nfamor {y_{1}}{(p_{1},\mstat 1)}{(\nondetnode,\mstat 2)}
%\qquad
%\nfamor a{\nondetnode}{\nondeta}
%\qquad
%\nfamor a{\nondetnode}{\nondetb}
%\qquad
%\nfamor {a}{\mstat 2}{\mstat 3}
%\qquad
%\nfamor{\pump 2}{(\nondeta,\nondetb,\mstat 3)}{(p_{4},p_{4},\mstat 4)}
%}{
%\twopaths{(y_{1}\,a\,y_{2})}{(p_{1},\mstat 1)}{(p_{4},\mstat 4)}
%}
%\]
%\caption{The two-paths relation $\rightrightarrows$}
%\label{twopaths}
%\end{figure*}

\begin{example}\label{analysis:example_pumpable}
The following diagram shows an NFA corresponding to the regular expression $\kleene{(a|b|ab)}$:
\begin{center}
\begin{tikzpicture}[scale=0.8, every node/.style={transform shape},->,>=stealth',shorten >=1pt,auto,node distance=2.5cm, semithick]
\node[state, accepting] (A) at (0,0) {$p_1$};
\node[state] (B) at (2,-2) {$p_2$};

\path
  (A) edge [loop above] node {a} (A)
  (A) edge [loop left] node {b} (A)
  (A) edge [bend right] node {a} (B)
  (B) edge [bend right] node [above] {b} (A);

\end{tikzpicture}
\end{center}
Taking $\loopnode = p_1$ and $\mstat{x} = \{p_1\}$, the pumping analysis leads to the following derivation:
\begin{align*}
y_1 = \empstring \qquad &\twomor{\empstring}{(p_1, \{p_1\})}{(p_1, \{p_1\})}\\
(p_1, a, \{p_{1}, p_{2}\}) \qquad &\dfamor{a}{\{p_1\}}{\{p_1, p_2\}}\\
y_2 = b \qquad &\threemor{b}{(p_1, p_2, \{p_1, p_2\})}{(p_1, p_1, \{p_1, p_2\})}
\end{align*}
Here we have an \emph{unstable} derivation since $\{p_1, p_2\} \not\subseteq \{p_1\}$ (i.e. $\mstat{y_2} \not\subseteq \mstat{x}$). If we were to take $\mstat{x} = \{p_1, p_2\}$ (i.e. $x = a$), the resulting derivation would be stable (for the same pumpable string $ab$). Stable derivations ensure that multiple pumpings of a pumpable string do not diverge $\mstat{y_2}$, which in turn ensures the correctness of the failure suffix. We treat this inclusion more formally in Lemma~\ref{lem_stability}.
\end{example}

\subsection{Suffix analysis}
\label{suffixanalysis}
%
%In automata theory, there is a monoid action of sets of strings on sets of states.
%\begin{eqnarray*}
%\mstat{}\odot W
%&\defeq& 
%\{ 
%p_{2} \mid \exists p_{1}\in \mstat{}.\exists w \in W. \nfamor{w}{p_{1}}{p_{2}}
%\}
%\end{eqnarray*}

For each $\pumpstatthree$ constructed by the pumping analysis, the REDoS failure analysis computes all  multistates  $\suffstat$ such that there is a $z$ with:
\[
 z: \pumpstatthree \dfaarrow \suffstat
\qquad
\wedge
\qquad
\suffstat \cap\Acc=\emptyset
\]

Intuitively, $z$ \emph{fails all} the states in $\pumpstatthree$ by taking them to $\suffstat$, which does not contain any accepting states. 

\section{Test cases for the REDoS analysis} 
\label{examples}

In order to demonstrate the behavior of the analyser, here we present examples that exercise the most important aspects of its operation.

\subsection{Non commutativity of alternation}
This aspect of the analysis can be illustrated with the following two example expressions:
\[
\kleene{.} \mid \kleene{(a \mid b \mid ab)}c
\]
\[
\kleene{(a \mid b \mid ab)}c \mid \kleene{.}
\]
Even though the two expressions correspond to the same language, only the second expression yields a successful attack. In the first expression, all the multi-states starting from $\Phi_{x}$ ($\elems{\beta\loopnode}$) consist of a state corresponding to the expression $(\kleene{.})$, which implies that this expression is capable of consuming any input string thrown at it without invoking the vulnerable Kleene expression. On the other hand, $\Phi_{x}$ calculated for the second expression lacks a state corresponding to $(\kleene{.})$, leading to the following attack string configuration:
\[
x = \varepsilon \quad y = ab \quad z = \varepsilon
\]

\subsection{Prefix construction}\label{tcases_prefixcon}
Prefix construction plays one of the most crucial roles in finding an attack string. In the following example, only a certain prefix leads to a successful attack string derivation:
\[
c\kleene{.}|(c \mid d)\kleene{(a \mid b \mid ab)}e
\]
Notice that a prefix $c$ would trigger the $(\kleene{.})$ on the left due to the left-biased treatment of alternation in backtracking matchers. The prefix $d$ on the other hand forces the matcher out of this possibility. The difference between these two prefixes is captured in two different values of $(x, \Phi_{x})$:
\[
(c, \{p_1, p_2\}) \quad (d, \{p_2\})
\]
Where $p_1$ corresponds to the sub-expression $(\kleene{.})$ and $p_2$ corresponds to $\kleene{(a \mid b \mid ab)}e$. Only the latter of these two leads to a successful attack string:
\[
x = d \quad y = ab \quad z = \varepsilon
\]

Prefix construction may also lead to loop unrolling when necessary. For an example, consider the following regex:
\[
(a \mid b)\kleene{.}|\kleene{c}\kleene{(a \mid ab \mid b)}d
\]
Without the unrolling of the Kleene expression $\kleene{c}$, any pumpable string intended for the vulnerable Kleene expression will be consumed by the alternation on the left. The analyser captures this situation again as two different values of $(x, \Phi_{x})$, one for $x = c$ and the other for either $x = a$ or $x = b$. Only the former value leads to a successful attack string:
\[
x = c \quad y = ab \quad z = \varepsilon
\]
The amount of loop unrolling is limited by the finiteness of the $\Phi_x$ values. In the following example, the loop $\kleene{c}$ needs to be unrolled twice:
\[
(c \mid a \mid b)(a \mid b)\kleene{.}|\kleene{c}\kleene{(a \mid b \mid ab)}d
\]
Here, unrolling $\kleene{c}$ 0 - 2 times leads to three distinct values of $\Phi_x$ due to the different matching states on the left alternation. Only one of those unrollings leads to a successful attack string:
\[
x = cc \quad y = ab \quad z = \varepsilon
\]

\subsection{Pumpable construction}
As is the case with prefixes, the existence of an attack string may depend on the construction of an appropriate pumpable string. For an example, consider the following regex:
\[
\kleene{(a \mid a \mid b \mid b)}(a\kleene{.} \mid c)
\]
Here the pumpable string $a$ does not yield an attack string since it also triggers the $(\kleene{.})$ continuation. On the other hand, the pumpable string $b$ avoids this situation and leads to the following attack string configuration:
\[
x = \varepsilon \quad y = b \quad z = \varepsilon
\]
Similar to the prefix analysis, pumpable analysis utilises $(y, \Phi_{y})$ values to select between pumpable strings.

In some cases, the pumpable construction overlaps with prefix construction. In the example below, an attack string may be composed in two different ways:
\[
d\kleene{.}|\kleene{((c \mid d)(a \mid a))}b
\]
Here, choosing $ca$ as the pumpable string leads to a successful attack string derivation:
\[
x = \varepsilon \quad y = ca \quad z = \varepsilon
\]
However, it is also possible to form an attack string with the following configuration:
\[
x = ca \quad y = da \quad z = \varepsilon
\]
The important point here is that the attack string must begin with a $c$ instead of a $d$ in order to avoid the obvious match on the left. The analyser is capable of finding both the configurations that meet this requirement.

Pumpable construction may also lead to loop unrolling when necessary, as demonstrated by the following example:
\[
a\kleene{.}|\kleene{(\kleene{c}a(b \mid b))}d
\]
Without unrolling the inner loop $\kleene{c}$, the pumpable string $ab$ would trigger the alternation on the left. A successful attack string requires the unrolling of this inner loop, as in the following configuration:
\[
x = \varepsilon \quad y = cab \quad z = \varepsilon
\]
As with the previous example, the unrolling of the inner loop $\kleene{c}$ may be performed as part of the prefix construction, leading to the following alternate attack string configuration:
\[
x = cab \quad y = ab \quad z = \varepsilon
\]
The latter configuration may be considered more desirable in that it makes the the pumpable string shorter, leading to much smaller attack strings.

%\subsection{Suffix construction}
%Given the previous phases of the analysis, suffix construction is straightforward. The analysis finds all reachable states that do not contain an accepting state.

\section{Soundness of the analysis}\label{soundness}
The \machname{} performs a depth-first search of a search tree. Proofs about runs of the machine are thus complicated by the fact that the construction of the tree and its traversal are conflated. To make reasoning more compositional, we define a substructural calculus for constructing search trees. Machine runs correspond to paths from roots to leaves in these trees.

\subsection{Search tree logic}\label{treelogic}
\begin{definition}[Search tree logic]
The search tree logic has judgements of the form
\[
\treemor w{\pseq{1}}{\pseq{2}}
\]
where $w$ is an input string, and both $\pseq{1}$ and $\pseq{2}$ are sequences of NFA states. The inference rules are given in \Cref{treemor}.
\end{definition}

Intuitively, the judgement 
\[
\treemor w{\pseq{1}}{\pseq{2}}
\]
means that there is a horizontal slice of the search tree, such that the nodes at the top form the sequence ${\pseq{1}}$, the nodes at the bottom form the sequence ${\pseq{2}}$, and all paths have the same sequence of labels, forming $w$: 
\begin{center}
\begin{tikzpicture}
\draw (0,0) -- (1,1) --(3,1) -- (4,0)-- (0,0) ;
\node[above] at (2,1) {$\pseq 1$};
\node[] at (0,.6) {$w$};
\node[] at (4,.6) {$w$};
\node[below] at (2,0) {$\pseq 2$};
\end{tikzpicture}
\end{center}
Each $w$ represents an NFA run $\nfamor w{p_{1}}{p_{2}}$ for some $p_{1}$ that occurs in $\pseq 1$ and some $p_{2}$ that occurs in $\pseq 2$. The string $w$ labels the sides of the trapezoid, since that determines the compatible boundary for parallel composition. Again we may like to think of $w$ as a proof of reachability. Here the reachability is not in the NFA, but in the matcher based on it.

\begin{figure}[tp]
\[
\infern{\omor{a}{p}{\pseq{}}}{\treemor{a}{p}{\pseq{}}}{\textsc{Trans}}
\]
\[
\infern{\treemor{w_{1}}{\pseq{1}}{\pseq{2}} \qquad \treemor{w_{2}}{\pseq{2}}{\pseq{3}}}
{\treemor{(w_{1}\,w_{2})}{\pseq{1}}{\pseq{3}}}{\textsc{SeqComp}}
\]
\[
\infern{}{\treemor \varepsilon{\pseq{}}{\pseq{}}}{\varepsilon\textsc{Seq}}
\]
\[
\infern{\treemor{w}{\pseq{1}}{\pseq{2}} \qquad \treemor{w}{\pseq{1}'}{\pseq{2}'}}
{\treemor{w}{(\pseq{1}\,\pseq{1}')}{(\pseq{2}\,\pseq{2}')}}{\textsc{ParComp}}
\]
\[
\infern{}{\treemor w\varepsilon\varepsilon}{\varepsilon\textsc{Par}}
\]
\caption{Search tree logic}
\label{treemor}
\end{figure}

%The search tree logic has a simple denotational model consisting of sequences of paths:
%\[
%(p_{0},a_{0},p_{1},a_{1}, \ldots, p_{n},a_{n},p_{n+1}) 
%\]
%The judgements
%\[
%\treemor w{\pseq 1}{\pseq 2}
%\]
% also have a geometric interpretation as a trapezoid:
%\begin{center}
%\begin{tikzpicture}
%\draw (0,0) -- (1,1) --(3,1) -- (4,0)-- (0,0) ;
%\node[above] at (2,1) {$\pseq 1$};
%\node[] at (0,.6) {$w$};
%\node[] at (4,.6) {$w$};
%\node[below] at (2,0) {$\pseq 1$};
%\end{tikzpicture}
%\end{center}
%The top is $\pseq{1}$, the bottom is  $\pseq{2}$ and the interior consists of search paths for the same string $w$.  
%Each $w$ represents an NFA run $\nfamor w{p_{1}}{p_{2}}$ for some $p_{1}$ that occurs in $\pseq 1$ and some $p_{2}$ that occurs in $\pseq 2$. The string $w$ labels the sides of the trapezoid, since that determines the compatible boundary for parallel composition.

The trapezoid can be stacked on top of each other if they share a common $\pseq{}$ at the boundary. They can be placed side-by-side if they have the same $w$ on the inside:

\begin{center}
\begin{tikzpicture}
\draw (0,0) -- (4,4) --(8,0);
%\draw (3,1) -- (3,0);
\draw (0,0) -- (8,0);
\draw (2,2) -- (3,2);
\draw (5,2) -- (6,2);
\draw (1,1) -- (3.5,1);
\draw (4.5,1) -- (7,1);
\draw (3,2) -- (4,0);
\draw (5,2) -- (4,0);
%\draw[->] (9,2) -- (9,1);\draw[->] (9,1) -- (9,0);

\node[above] at (3,1.1) {$w_{1}$};
\node[above] at (5,1.1) {$w_{1}$};

\node[above] at (3.3,0.3) {$w_{2}$};
\node[above] at (4.7,0.3) {$w_{2}$};

\node[above] at (4,4) {$p$};
\node[above] at (2.7,2) {$\beta_{1}$};

\node[above] at (5.2,2) {$\beta_{1}'$};
\node[below] at (6,0) {$\beta_{3}'$};
\node[below] at (2,0) {$\beta_{3}$};
\end{tikzpicture}
\end{center}
Note that the search tree logic in Figure~\ref{treemor} is substructural, in that, common structural rules such as weakening and contraction do not hold in general:
\[
\infern{\treemor{w}{\pseq{1}}{\pseq{1}'}}{\treemor{w}{\pseq{1}\pseq{2}}{\pseq{1}'}}{\textsc{NotWeaken}} \quad \infern{\treemor{w}{\pseq{}\pseq{}}{\pseq{}'}}{\treemor{w}{\pseq{}}{\pseq{}'}}{\textsc{NotContract}}
\]

\subsection{Pumpable implies exponential tree growth}
We use the search tree logic to construct a tree by closely following the phases of our REDoS analysis. The exponential growth of the search tree in response to pumping is easiest to see when thinking of horizontal slices across the search tree for each pumping of $\pump{}$. The machine computes a diagonal cut across the search tree as it moves towards the left corner. The analysis constructs horizontal cuts with all states at the same depth. It is sufficient to show that the width of the search tree grows exponentially. The width is easier to formalize than the size.

We need a series of technical lemmas connecting different transition relations.
\begin{lemma}\label{dfamorunion}
The following rule is admissible:
\[
\infer{\dfamor w{\mstat{1}}{\mstat{2}} \qquad \dfamor w{\mstat{1}'}{\mstat{2}'}}
{\dfamor w{(\mstat{1} \cup \mstat{1}')}{(\mstat{2}\cup \mstat{2}')}}
\]
\end{lemma}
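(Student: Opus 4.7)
The plan is to unfold Definition~\ref{dfasettrans} on both hypotheses and the desired conclusion, reducing the claim to a simple set-theoretic identity: distributivity of existential quantification over union. Since $\dfaarrow$ uniquely determines the target multistate as the $\nfatrans$-image of the source, the lemma reduces to verifying one equation between sets of NFA states.

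First I would expand the two premises via Definition~\ref{dfasettrans}, obtaining
\[
\mstat 2 \;=\; \{\, q \mid \exists p \in \mstat 1.\ \nfamor{w}{p}{q}\,\}
\qquad \text{and} \qquad
\mstat 2' \;=\; \{\, q \mid \exists p \in \mstat 1'.\ \nfamor{w}{p}{q}\,\}.
\]
Then I would take the union and apply the elementary identity $\{q \mid P(q)\} \cup \{q \mid Q(q)\} = \{q \mid P(q) \vee Q(q)\}$, together with the fact that $(\exists p \in \mstat 1.\ \nfamor{w}{p}{q}) \vee (\exists p \in \mstat 1'.\ \nfamor{w}{p}{q})$ is logically equivalent to $\exists p \in \mstat 1 \cup \mstat 1'.\ \nfamor{w}{p}{q}$. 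This yields
\[
\mstat 2 \cup \mstat 2' \;=\; \{\, q \mid \exists p \in \mstat 1 \cup \mstat 1'.\ \nfamor{w}{p}{q}\,\},
\]
which is precisely the side condition required by Definition~\ref{dfasettrans} for deriving $\dfamor{w}{(\mstat 1 \cup \mstat 1')}{(\mstat 2 \cup \mstat 2')}$.

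There is no real obstacle here: the lemma is essentially a bookkeeping fact witnessing that the power DFA transition, viewed as a function of its source, preserves unions. The only point worth noting is that because Definition~\ref{dfasettrans} specifies the target multistate as an equality (not merely as a sound over-approximation), the union on the left matches the union on the right on the nose, so no separate inclusion argument in both directions is needed beyond the single set-theoretic equivalence above.
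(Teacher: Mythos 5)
Your proof is correct and is exactly the evident argument: the paper states this lemma without proof, and the intended justification is precisely the direct unfolding of Definition~\ref{dfasettrans} together with the fact that an existential over $\mstat 1 \cup \mstat 1'$ splits as the disjunction of existentials over $\mstat 1$ and $\mstat 1'$. Your closing remark that the equality (rather than mere inclusion) in the definition makes the two directions come for free is also the right observation.
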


\begin{lemma}[$\dfaarrow\triangle$ simulation]\label{dfatree}
If $\dfamor w{\mstat 1}{\mstat 2}$, $\treemor w{\pseq 1}{\pseq 2}$ and $\mstat 1= \elems{\pseq 1}$, then $\mstat 2= \elems{\pseq 2}$.
\end{lemma}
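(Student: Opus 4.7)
The plan is to prove this by structural induction on the derivation of $\treemor{w}{\pseq{1}}{\pseq{2}}$ in the search tree logic, with one case per inference rule in Figure~\ref{treemor}. The hypothesis $\mstat 1 = \elems{\pseq 1}$ is propagated (or combined via union) at each step, and the conclusion is read off the definition of the power DFA. Since the power DFA transition is functional (given $\mstat 1$ and $w$, the target $\mstat 2$ is unique), the argument is essentially: build the ``correct'' target from the tree side, then observe it is forced to coincide with the $\mstat 2$ from the hypothesis.

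First I would handle the leaf cases. For \textsc{Trans1}, from $\omor{a}{p}{\pseq{}}$, the NFA successors of $p$ under $a$ are exactly the elements of $\pseq{}$, so by Definition~\ref{powerdfa} applied to $\mstat 1 = \{p\}$ we get $\mstat 2 = \elems{\pseq{}}$. For \textsc{Trans2}, the premise forces $p$ to have no $a$-successors, hence $\mstat 2 = \emptyset = \elems{\varepsilon}$. For $\varepsilon\textsc{Seq}$, the empty-string transition of the power DFA fixes $\mstat 2 = \mstat 1$, matching $\pseq 1 = \pseq 2$. For $\varepsilon\textsc{Par}$, both sides are empty, so $\mstat 1 = \emptyset$, and $\dfamor w \emptyset \emptyset$ by the existential characterization.

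The two interesting cases are the compositional rules. For \textsc{SeqComp}, decompose $\dfamor{w_1\,w_2}{\mstat 1}{\mstat 3}$ into intermediate $\mstat 2$ with $\dfamor{w_1}{\mstat 1}{\mstat 2}$ and $\dfamor{w_2}{\mstat 2}{\mstat 3}$. Apply the induction hypothesis to the first subderivation with $\mstat 1 = \elems{\pseq 1}$ to conclude $\mstat 2 = \elems{\pseq 2}$, then apply it again to the second subderivation to get $\mstat 3 = \elems{\pseq 3}$. For \textsc{ParComp}, we have $\mstat 1 = \elems{\pseq 1\,\pseq_1'} = \elems{\pseq 1} \cup \elems{\pseq_1'}$. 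Here the hypothesis does not directly match either sub-premise, so I would run the power DFA on each part separately to obtain $\dfamor{w}{\elems{\pseq 1}}{\mstat 2^{a}}$ and $\dfamor{w}{\elems{\pseq_1'}}{\mstat 2^{b}}$; the induction hypothesis gives $\mstat 2^{a} = \elems{\pseq 2}$ and $\mstat 2^{b} = \elems{\pseq_2'}$. Then Lemma~\ref{dfamorunion} combines them into $\dfamor{w}{(\elems{\pseq 1}\cup\elems{\pseq_1'})}{(\elems{\pseq 2}\cup\elems{\pseq_2'})}$, and determinism of the power DFA forces $\mstat 2 = \elems{\pseq 2}\cup\elems{\pseq_2'} = \elems{\pseq 2\,\pseq_2'}$.

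The main obstacle is precisely the \textsc{ParComp} case, since the inductive hypothesis is not immediately applicable: the tree side has split the multistate into two (possibly overlapping) pieces while the DFA side sees only the union. The key ingredient is Lemma~\ref{dfamorunion} together with the uniqueness of the power DFA target; this is the reason the preceding lemma was stated. A minor subtlety worth checking is that $\dfaarrow$ extends to strings in the obvious functorial way (composition of single-symbol steps and identity at $\varepsilon$), as implicitly used in the \textsc{SeqComp} and $\varepsilon\textsc{Seq}$ cases; this follows from Definition~\ref{dfasettrans} instantiated at singleton sets of strings.
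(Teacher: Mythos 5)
Your proof is correct and follows the natural route for this lemma: structural induction on the derivation of $\treemor{w}{\pseq 1}{\pseq 2}$, using totality and functionality of $\dfaarrow$ together with Lemma~\ref{dfamorunion} to handle the \textsc{ParComp} case, and the decomposition of string reachability for \textsc{SeqComp}. All six rule cases are treated soundly, so nothing further is needed.
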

\begin{proof}
Suppose:
\[
\pseq 1 = (p_1 \ldots p_n) \qquad a : p_i \mapsto \theta_i
\]
Then from the search tree logic we get $\treemor {a}{\pseq 1}{(\theta_1 \ldots \theta_n)}$. Moreover, the definition of $\dfaarrow$ implies $\dfamor a{\{p_i\}}{\elems{\theta_i}}$. Now, applying Lemma~\ref{dfamorunion} gives:
\[
\dfamor{a}{\elems{\pseq 1}}{\elems{\theta_1} \cup \ldots \cup \elems{\theta_n}} = \elems{\theta_1 \ldots \theta_n}
\]
Therefore, the result holds for strings of unit length. An induction on the length of $w$ completes the proof.
\end{proof}

\begin{lemma}[$\ordtrans\triangle$ simulation]\label{ordtranstree}
If $w:\pseq 1 \ordtrans \pseq 2$, $\treemor w{\pseq 1'}{\pseq 2'}$ and $\pseq 1' \ggg \pseq 1$, then $\pseq 2' \ggg \pseq 2$. 
\end{lemma}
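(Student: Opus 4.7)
The plan is to induct on the derivation of $w : \pseq 1 \ordtrans \pseq 2$, relying on two inversion lemmas about the search tree logic of Figure~\ref{treemor}. The first is a \emph{horizontal decomposition}: any derivation $\treemor{(w_1 w_2)}{\pseq{}}{\pseq{}'}$ can be reshaped so that it ends with \textsc{SeqComp} splitting at the $w_1/w_2$ boundary, yielding some $\pseq{}''$ with $\treemor{w_1}{\pseq{}}{\pseq{}''}$ and $\treemor{w_2}{\pseq{}''}{\pseq{}'}$. The second is a \emph{pointwise shape lemma} for single characters: if $\treemor{a}{(q_1\ldots q_k)}{\pseq{}}$, then $\pseq{} = \gamma'_1\cdots \gamma'_k$ where for each $i$ either $\omor{a}{q_i}{\gamma'_i}$ (coming from \textsc{Trans1}) or $q_i$ has no $a$-transition and $\gamma'_i = \varepsilon$ (coming from \textsc{Trans2}). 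Both follow by induction on the tree derivation; the delicate bit is that \textsc{ParComp} may join sibling slices at arbitrary widths, so the decompositions need to commute \textsc{ParComp} past \textsc{SeqComp}.

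For the base case $w = \varepsilon$ of $\ordtrans$, we have $\pseq 2 = \pseq 1$, and a routine induction on the tree derivation shows that $\treemor{\varepsilon}{\pseq 1'}{\pseq 2'}$ forces $\pseq 1' = \pseq 2'$, since only the two $\varepsilon$-rules and the two composition rules applied with empty labels can occur, and each preserves top--bottom identity. Hence $\pseq 2' = \pseq 1' \ggg \pseq 1 = \pseq 2$. For the inductive step, suppose $(w\,a) : \pseq 1 \ordtrans \pseq 3$ is derived from $w : \pseq 1 \ordtrans \pseq 2$ and $a : \pseq 2 \ordtrans \pseq 3$. Horizontal decomposition yields $\pseq 2'$ with $\treemor{w}{\pseq 1'}{\pseq 2'}$ and $\treemor{a}{\pseq 2'}{\pseq 3'}$; the induction hypothesis gives $\pseq 2' \ggg \pseq 2$, reducing the task to closing a single $a$-step.

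For that single $a$-step, write $\pseq 2 = p_1\ldots p_m$ and fix $\gamma_i$ with $\omor{a}{p_i}{\gamma_i}$ and $(\gamma_1\ldots\gamma_m) \ggg \pseq 3$ by inversion on the $\ordtrans$ rule; write $\pseq 2' = q_1\ldots q_k$ and apply the pointwise shape lemma to obtain $\pseq 3' = \gamma'_1\ldots\gamma'_k$. Since $\pseq 2' \ggg \pseq 2$, the leftmost occurrences in $\pseq 2'$ are exactly $p_1,\ldots,p_m$ in that order, and every non-leftmost $q_j$ equals some earlier $q_{j'} = p_i$, forcing $\gamma'_j = \gamma_i$. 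Therefore $\gamma'_1\ldots\gamma'_k$ consists of the blocks $\gamma_1,\ldots,\gamma_m$ in order, interspersed with duplicate blocks whose states have already appeared to the left. Since $\ggg$ discards every non-leftmost occurrence, $\pseq 3' \ggg \pseq 3$ follows.

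The main obstacle will be this last combinatorial step, because $\ggg$ operates on flat sequences and is blind to the block structure $\gamma'_1\ldots\gamma'_k$: one has to argue globally that each state in a duplicate block has already appeared in an earlier block, and that removing such duplicates does not disturb the order of the leftmost occurrences inside the retained blocks. A clean way to formalise this is to introduce an auxiliary relation $\pseq{} \mathrel{\approx} \pseq{}''$ meaning ``$\pseq{}''$ extends $\pseq{}$ by inserting copies of states already present to the left'', show that it subsumes $\ggg$-equivalence, is preserved by the pointwise NFA step of the tree logic, and collapses under $\ggg$ to a unique normal form. With that in place, the remainder of the argument is bookkeeping.
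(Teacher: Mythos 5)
The paper states Lemma~\ref{ordtranstree} without proof, so there is no official argument to measure yours against; judged on its own terms, your plan is correct and complete in outline. The two inversion lemmas you rely on do hold for the calculus of Figure~\ref{treemor}: \textsc{Trans1}/\textsc{Trans2} make the tree logic deterministic for a fixed top row, a derivation of $\treemor{\varepsilon}{\pseq{}}{\pseq{}'}$ forces $\pseq{}=\pseq{}'$, and \textsc{ParComp} commutes past \textsc{SeqComp} exactly as you say, giving both the horizontal decomposition and the pointwise block shape $\pseq 3' = \gamma'_1\cdots\gamma'_k$. The step you flag as the main obstacle also goes through, and more directly than your auxiliary relation suggests. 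Since $\pseq 2' \ggg \pseq 2$ with $\pseq 2 = p_1\ldots p_m$ duplicate-free, the map sending position $j$ in $\pseq 2'$ to the index $i$ with $q_j = p_i$ is surjective and its first occurrences of $1,\ldots,m$ appear in that order; hence in $\gamma'_1\cdots\gamma'_k$ the first copies of $\gamma_1,\ldots,\gamma_m$ appear in that order, and every other block duplicates a copy already to its left. A short argument then shows that the leftmost occurrence of any state $r$ in $\gamma'_1\cdots\gamma'_k$ lies inside the \emph{first} copy of the same block $\gamma_{i_0}$ that contains the leftmost occurrence of $r$ in $\gamma_1\cdots\gamma_m$ (if it lay in a copy of some $\gamma_{i_1}$ with $i_1>i_0$, the earlier first copy of $\gamma_{i_0}$ would already contain $r$, a contradiction). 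So the two concatenations have the same elements with the same order of first occurrences, hence the same $\ggg$-normal form $\pseq 3$, which is exactly $\pseq 3' \ggg \pseq 3$. The only point worth tightening is the convention for states with no $a$-transition: you should fix once that the ordered transition function returns $\varepsilon$ there, so that the rule of Definition~\ref{orderedtrans} and \textsc{Trans2} agree on the blocks $\gamma_i$.
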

\begin{proof}
Suppose:
\[
\pseq{1}' = (p_{11} \ldots p_{mk}) \qquad a : p_{ij} \mapsto \theta_{ij}
\]
Where $p_{ij}$ corresponds to the $j$th occurrence of the state $p_i$. Equivalently:
\[
p_{ij} = p_{i'j'} \Longleftrightarrow i = i'
\]
Given $\pseq{1}' \ggg \pseq{1}$, we deduce:
\[
(p_{11} \ldots p_{mk}) \ggg (p_{11} \ldots p_{m1}) = \pseq{1}
\]
Now, given $a : \pseq{1} \ordtrans \pseq{2}$, the definition of $\ordtrans$ gives:
\[
(\theta_{11} \ldots \theta_{m1}) \ggg \pseq{2}
\]
On the other hand, $\treemor{a}{\pseq{1}'}{\pseq{2}'}$ gives:
\[
\pseq{2}' = (\theta_{11} \ldots \theta_{mk})
\]
The definition of $\ggg$ can be generalized for multi-states, which leaves us with:
\[
(\theta_{11} \ldots \theta_{mk}) \ggg (\theta_{11} \ldots \theta_{m1})
\]
That is, we have shown:
\[
\pseq{2}' = (\theta_{11} \ldots \theta_{mk}) \ggg (\theta_{11} \ldots \theta_{m1}) \ggg \pseq{2}
\]
An induction on the length of $w$ completes the proof.
\end{proof}

\begin{lemma}[$\nfatrans \triangle$ simulation]\label{nfatotree}
Given $\nfamor wpq$, there are sequences of states $\pseq 1$ and $\pseq 2$ such that $\treemor wp{{\pseq{1}}\,q\,{\pseq{2}}}$.
\end{lemma}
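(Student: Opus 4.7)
The plan is to induct on the length of $w$ (equivalently, on the derivation of the NFA run $\nfamor w p q$). In the base case $w = \varepsilon$, the $\varepsilon$-free NFA forces $p = q$, and rule $\varepsilon\textsc{Seq}$ yields $\treemor \varepsilon p p$ with $\pseq 1 = \pseq 2 = \varepsilon$.

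For the inductive step, factor $w = a\,w'$ so that $\nfamor a p {p'}$ and $\nfamor {w'} {p'} q$. The one-step transition arises from some ordered successor list $a : p \mapsto q_1 \ldots q_n$ with $p' = q_i$ for some $i$. Rule \textsc{Trans1} then gives $\treemor a p {q_1 \ldots q_n}$, and the induction hypothesis provides a derivation $\treemor {w'}{p'}{\pseq 1' \, q \, \pseq 2'}$. To connect these via \textsc{SeqComp} we need a derivation of the form $\treemor {w'}{q_1 \ldots q_n}{\cdot}$, which requires processing $w'$ from each sibling $q_j$ with $j \neq i$ as well, not just from $p' = q_i$.

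The main obstacle is therefore a subsidiary \emph{totality} sub-lemma: for every state $r$ and every string $v$, there exists some $\pseq{}$ with $\treemor v r {\pseq{}}$. I would prove this by induction on $|v|$, using $\varepsilon\textsc{Seq}$ in the empty case and, for $v = b\,v'$, dispatching on whether $r$ admits a $b$-transition: rule \textsc{Trans1} supplies the first step when a successor list exists, and \textsc{Trans2} does so otherwise (yielding the empty continuation $\varepsilon$); the resulting child states are then handled by the inner induction and their derivations assembled using iterated \textsc{ParComp} anchored at $\varepsilon\textsc{Par}$, followed by a final \textsc{SeqComp}.

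With the totality sub-lemma in hand, for each $j \neq i$ choose $\gamma_j$ with $\treemor {w'}{q_j}{\gamma_j}$, and set $\gamma_i = \pseq 1' \, q \, \pseq 2'$. Iterated \textsc{ParComp} composes these $n$ judgements into $\treemor {w'}{q_1 \ldots q_n}{\gamma_1 \ldots \gamma_n}$, and \textsc{SeqComp} with $\treemor a p {q_1 \ldots q_n}$ then delivers
\[
\treemor {a w'} p {\gamma_1 \ldots \gamma_{i-1} \, \pseq 1' \, q \, \pseq 2' \, \gamma_{i+1} \ldots \gamma_n},
\]
so we may read off $\pseq 1 = \gamma_1 \ldots \gamma_{i-1} \, \pseq 1'$ and $\pseq 2 = \pseq 2' \, \gamma_{i+1} \ldots \gamma_n$, as required. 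Beyond the totality sub-lemma, everything is routine plugging of derivations into the tree-logic rules.
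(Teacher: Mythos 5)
Your proof is correct. The paper states Lemma~\ref{nfatotree} without proof, so there is nothing to compare against line by line, but your argument is the natural one and, more importantly, you have correctly isolated the one non-trivial ingredient: the totality of the search tree logic, i.e.\ that for every state $r$ and every string $v$ there is \emph{some} fringe $\pseq{}$ with $\treemor v r {\pseq{}}$. This is exactly what is needed to extend the single NFA path $\nfamor{w'}{p'}{q}$ to a full-width tree judgement over all the ordered siblings $q_1\ldots q_n$, and it is precisely the role of rule \textsc{Trans2} (paired with $\varepsilon\textsc{Par}$) to make the sub-lemma go through when a state has no $b$-successor. The assembly via iterated \textsc{ParComp} anchored at $\varepsilon\textsc{Par}$ followed by \textsc{SeqComp}, and the final reading-off of $\pseq 1$ and $\pseq 2$ from the concatenated fringes, are all as they should be. The only implicit assumption worth flagging is that the string-extension of $\nfatrans$ satisfies $\nfamor{\varepsilon}{p}{q}\Rightarrow p=q$; the paper never writes down that extension explicitly, but this is the only sensible reading for an $\varepsilon$-free NFA, so your base case stands.
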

\begin{proof}
The base case ($w = a$) holds from the definition of $\triangle$. For the inductive step, suppose $\nfamor{w}{p}{q}$ and $a : q \mapsto q'$. Then from the induction hypothesis we get $\treemor{w}{p}{\pseq{1}\,q\,\pseq{2}}$ for some $\pseq{1}, \pseq{2}$. Moreover, from the base case we have $\treemor{a}{q}{\pseq{3}\,q'\,\pseq{4}}$ for some $\pseq{3}, \pseq{4}$. Assuming $\treemor{a}{\pseq{1}}{\pseq{1}'}$ and $\treemor{a}{\pseq{2}}{\pseq{2}'}$ for some $\pseq{1}', \pseq{2}'$, the definition of $\triangle$ gives $\treemor{wa}{p}{\pseq{1}'\,\pseq{3}\,q'\,\pseq{4}\,\pseq{2'}}$.
\end{proof}

\begin{lemma}[Pumpable realizes non-linearity] \label{lem_pump_nonlin}
Let $y$ be pumpable  for some node $\loopnode$. Then there exist $\pseq 1$, $\pseq 2$, $\pseq 3$ such that:
\[
\treemor y{\loopnode}{\pseq 1\,\loopnode\,\pseq 2\,\loopnode\,\pseq 3}
\]
\end{lemma}
\begin{proof}
The pumpable analysis generates a string of the form:
\[
\pump{} = \pump 1\,a\,\pump 2
\]
Where
\[
\nfamor{\pump{1}}{\loopnode}{\nondetnode}
\]
\[
a : \nondetnode \mapsto (\pseq{}\;\nondeta\;\pseq{}'\;\nondetb\;\pseq{}'')
\]
\[
\nfamor{\pump{2}}{\nondeta}{\loopnode}
\qquad
\nfamor{\pump{2}}{\nondetb}{\loopnode}
\]
Now, Lemma~\ref{nfatotree} leads to the desired result.
\end{proof}

\begin{lemma}\label{lem_stability}
Let $x, y$ be constructed from the prefix analysis and the pumpable analysis such that:
\[
\begin{array}{rcll}
x &:& \startnode \ordtrans (\pseq{}\;\loopnode\;\pseq{}')\\
y &:& \elems{\pseq{}\,\loopnode} \dfaarrow \mstat{y}
& \qquad \mstat{y} \subseteq \elems{\pseq{}\,\loopnode}
\end{array}
\]
Then the following holds for any natural number $n$:
\[
\mstat{y^n} \subseteq \mstat{y^{n - 1}}
\]
Where $\mstat{y^0} = \elems{\pseq{}\,\loopnode}$ and $y^n : \mstat{y^0} \dfaarrow \mstat{y^n}$.
\end{lemma}
\begin{proof}
By induction on $n$. Note that the base case ($n = 1$) holds by construction. For the inductive step, suppose $\exists \; q \in \mstat{y^n}$, then from the definition of $\mstat{y^n}$ we get $\exists \; p \in \mstat{y^{n - 1}} \;.\;  y : p \nfatrans q$. Moreover, the induction hypothesis gives $\mstat{y^{n - 1}} \subseteq \mstat{y^{n - 2}}$. Therefore, we have $p \in \mstat{y^{n - 2}}$, which in turn implies $q \in \mstat{y^{n - 1}}$.
\end{proof}

The importance of Lemma~\ref{lem_stability} is that it allows us to calculate a failure suffix $z$ independent of the number of pumping iterations; $\mstat{y^n}$ can only shrink as $n$ increases.

\begin{lemma}[Exponential tree growth]\label{lem_expn_growth}
Let $x, y, z$ be constructed from the analysis such that:
\[
\begin{array}{rcll}
x &:& \startnode \ordtrans (\pseq{}\;\loopnode\;\pseq{}')\\
y &:& \elems{\pseq{}\,\loopnode} \dfaarrow \mstat{y}
& \qquad \mstat{y} \subseteq \elems{\pseq{}\,\loopnode}\\
z &:& \mstat{y} \dfaarrow {\failstate}
& \qquad \failstate \cap\Acc=\emptyset
\end{array}
\]
Then there exists $\pseq{L}, \pseq{R}$ such that:
\[
\treemor{x}{\startnode}{\pseq{L}\,\loopnode\,\pseq{R}} \,\land\, \elems{\pseq{L}} = \elems{\pseq{}} \tag{A}
\]
\[
\treemor{y^n}{\pseq{L}\,\loopnode}{\pseq{n}} \Rightarrow \length{\pseq{n}} \ge 2^n \tag{B}
\]
\[
z : \elems{\pseq{n}} \dfaarrow \mstat{}' \Rightarrow \mstat{}' \cap \Acc = \emptyset \tag{C}
\]
\end{lemma}
\begin{proof}
\begin{itemize}
\item
\textbf{Statement (A)}: Suppose $\treemor{x}{\startnode}{\pseq{x}}$. Then from Lemma~\ref{ordtranstree} it follows that $\pseq{x} \ggg \pseq{}\,\loopnode\,\pseq{}'$. That is, $\loopnode$ must occur in $\pseq{x}$. If we dissect $\pseq{x}$ into $\pseq{L}\,\loopnode\,\pseq{R}$ such that $\loopnode \not\in \elems{\pseq{L}}$, then from the definition of $\ggg$ it follows that $\elems{\pseq{L}} = \elems{\pseq{}}$.
\item
\textbf{Statement (B)}: Follows from Lemma~\ref{lem_pump_nonlin}. The number of copies of $\loopnode$ doubles at each pumping iteration.
\item
\textbf{Statement (C)}: Suppose $y^n : \elems{\pseq{L}\,\loopnode} \dfaarrow \pseq{n}'$ . Since $\elems{\pseq{L}\,\loopnode} = \elems{\pseq{}\,\loopnode}$ (statement A), Lemma~\ref{dfatree} gives: $\elems{\pseq{n}} = \elems{\pseq{n}'}$. Now from Lemma~\ref{lem_stability} it follows that $\elems{\pseq{n}} \subseteq \mstat{y}$. Since $z$ cannot lead to a successful match from any state in $\mstat{y}$ (by construction), the same should be true for $\elems{\pseq{n}}$.
\end{itemize}
\end{proof}

Lemma~\ref{lem_expn_growth} may be visualized as in Figure~\ref{fig_expn_growth}. Note that the right hand slice of the tree (emanating from $\pseq{}'$) is irrelevant, the depth-first strategy of a backtracking matcher forces it to explore the left hand slice first. Since none of the states at the bottom of the tree ($\pseq{n}'$) are accepting, it is forced to explore the (exponentially large, $\length{\pseq{n}} \ge 2^n$) entire slice (as proved in the following section).

\begin{figure}
\begin{center}
\begin{tikzpicture}
\draw (0,0) -- (4,4);
\draw[dashed] (4,4) -- (4,0);
\draw (4,0) -- (0, 0);
\node[above] at (4,4) {$p_0$};

\draw[dashed] (4,4) -- (6,0) -- (4, 0);

\draw (3,3) -- (4,3);
\node[below] at (3.5,3) {$\pseq{}$};

\draw[->, line width=1pt] (5,4) to (4.1,3.1);
\node[above right] at (5,4) {$\loopnode$};

\draw[dashed] (4,3) -- (4.5, 3);
\node[below] at (4.25,3) {$\pseq{}'$};

\draw (1,1) -- (4,1);
\node[below] at (2.5, 1) {$\pseq{n}$};
\node[below] at (2, 0) {$\pseq{n}'$};

\draw[|-|] (-1, 0) -- (-1, 1);
\node[left] at (-1, 0.5) {$z$};

\draw[-|] (-1, 1) -- (-1, 3);
\node[left] at (-1, 2) {$y^n$};

\draw[-|] (-1, 3) -- (-1, 4);
\node[left] at (-1, 3.5) {$x$};
\end{tikzpicture}
\caption{Tree growth (Lemma~\ref{lem_expn_growth})}
\label{fig_expn_growth}
\end{center}
\end{figure}

\newcommand\pin{\bar{\in}}
\subsection{From search tree to machine runs}\label{sec_treetoruns}
Having proved that the attack strings lead to exponentially large search trees, in this section we show how backtracking matchers are forced to traverse all of it. We use the notation $w[i:j]$ to represent the substring of $w$ starting at index $i$ (inclusive) and ending at index $j$ (exclusive). That is,
\begin{align*}
w[i:i] &= \empstring\\
w[i:j] &= w[i]...w[j-1] \qquad (i < j)
\end{align*}
We also use the notation $q \pin \pseq{}$ to identify an appearance of a state $q$ within $\pseq{}$:
\[
q \pin \pseq{} \Leftrightarrow \exists \pseq{}', \pseq{}'' \;.\; \pseq{} = \pseq{}' q \pseq{}''
\]

\begin{lemma}\label{lem_intermediate_reject}
Let $w$ be an input string of length $n$, $s$ a (constant) offset into $w$ ($0 \le s < n$) and $p$ a state such that:
\[
\treemor{w[s:i]}{p}{\pseq{i}} \quad s \le i \le n 
\]
\[
\elems{\pseq{n}} \cap \Acc = \emptyset
\]
Then for any $q \pin \pseq{i}$, and for any $\piseqs{}$, the following run exists:
\[
w \Vdash (q, i)\piseqs{} \cstep{*} \piseqs{}
\]
\end{lemma}
\begin{proof}
By induction on $(n - i)$. For the base case $(i = n)$, we have the machine state:
\[
(q, n)\,\piseqs{}
\]
Since $q \pin \pseq{n}$, this is not an accepting configuration. Therefore, we have:
\[
w \Vdash (q, n)\,\piseqs{} \cstep{} \piseqs{}
\]
For the inductive step, suppose $i = k\;(s \le k < n)$, then we have the machine:
\[
w \Vdash (q, k)\,\piseqs{}
\]
If $q$ has no transitions on $w[k]$, the proof is trivial. Let us assume:
\[
w[k] : q \mapsto q'_0 \ldots q'_m 
\]
Then we have the transition:
\[
w \Vdash (q, k)\,\piseqs{} \cstep{} (q'_0, k + 1) \ldots (q'_m, k + 1) \, \piseqs{}
\]
Now the definition of $\triangle$ implies that $q'_0, \ldots, q'_m$ are part of $\pseq{k + 1}$. Therefore, we can apply the induction hypothesis to each of the newly spawned frames in succession, which leads to the desired result.
\end{proof}

\begin{lemma}\label{lem_tree_cover}
Let $w$ be an input string of length $n$, $s$ a (constant) offset into $w$ ($0 \le s < n$) and $p$ a state such that:
\[
\treemor{w[s:i]}{p}{\pseq{i}} \quad s \le i \le n
\]
\[
\elems{\pseq{n}} \cap \Acc = \emptyset
\]
Then for any $q \pin \pseq{i}$, the following run exists (for some $\piseqs{}$):
\[
w \Vdash (p, s) \cstep{*} (q, i)\piseqs{}
\]
\end{lemma}
\begin{proof}
By induction on $(i - s)$. The base case $(i = s)$ holds trivially. For the inductive step, suppose  $i = k \; (s < k \le n)$ and that $\dot{q} \pin \pseq{k}$. Then from the definition of $\triangle$, there must be some $q' \pin \pseq{k - 1}$ such that:
\[
\pseq{k - 1} = \pseq{}\,q'\,\pseq{}' \qquad w[k - 1] : q' \nfatrans q_0 \ldots \dot{q} \ldots q_m 
\]
Now from the induction hypothesis we get:
\[
w \Vdash (p, s) \cstep{*} (q', k - 1)\,\piseqs{}
\]
Therefore, we deduce the run:
\[
w \Vdash (p, s) \cstep{*} (q', k - 1)\piseqs{} \cstep{} (q_0, k) \ldots (\dot{q}, k) \ldots (q_m, k)\,\piseqs{}
\]
At this point, applying Lemma~\ref{lem_intermediate_reject} to the newly spawned frames yields the required result.
\end{proof}

Given a search tree with all failure nodes at the bottom, Lemma~\ref{lem_intermediate_reject} shows that any intermediate frame reached during a simulation will eventually be rejected. Moreover, Lemma~\ref{lem_tree_cover} shows that a simulation corresponding to such a search tree is forced to visit each and every node of the tree.

\begin{lemma}[Tree traversal]\label{lem_tree_traverse}
Suppose $w$ is an input string of length $n$ such that:
\begin{align*}
\treemor{w[0:s] &}{p_0}{\pseq{}\,\overline{\pseq{}}}\\
\treemor{w[s:n] &}{\pseq{}}{\pseq{}'}\\
\elems{\pseq{}'}&\cap\Acc =\emptyset
\end{align*}
Then for any $q \pin \pseq{}$, the following machine run exists (for some $\piseqs{}$):
\[
w \Vdash (p_0, 0) \cstep{*} (q, s)\,\piseqs{}
\]
\end{lemma}
\begin{proof}
By induction on $s$. Note that the base case ($s = 0$) follows from Lemma~\ref{lem_tree_cover}. For the inductive step, suppose $s > 0$. Here we focus on the lowest common ancestor of all the states in $\pseq{}$, this situation is illustrated in the following figure:
\begin{center}
\begin{tikzpicture} [scale=0.8, every node/.style={transform shape},>=stealth',shorten >=1pt,auto,node distance=2.5cm, semithick]
\coordinate (left) at (0,0);
\coordinate (right) at (8,0);
\coordinate (top) at (4,7);
\node [above =1ex of top] {$p_0$};

\coordinate (u) at ($ (top) !.3! (left) $);
\coordinate (v) at ($ (top) !.3! (right) $);
\coordinate (uv) at ($ (v) !.4! (u) $);

\coordinate (p) at ($ (left) !.2! (top) $);
\coordinate (q) at ($ (right) !.2! (top) $);
\coordinate (pq) at ($ (q) !.3! (p) $);
\draw (p)--(pq) node[pos=.5,below] {$\pseq{}$};
\draw (pq)--(q) node[pos=.5,below] {$\overline{\pseq{}}$};

\coordinate (bot) at ($ (right) !.25! (left) $);

\draw (left)--(top)--(right);
\draw (right)--(bot) node[pos=.5,below] {$\overline{\pseq{}'}$};
\draw (bot)--(left) node[pos=.5,below] {$\pseq{}'$};

\draw[dashed] (u)--(uv)node[pos=.5,above] {$\theta$};
\draw[dashed] (uv)--(v)node[pos=.5,above] {$\overline\theta$};

\draw[dashed] (top)--(uv);
\draw[dashed] (pq)--(bot);

\draw[fill=black!20,dashed] (p)--(uv)--(pq);

\draw[|-|] let \p{x} = (uv) in (-.5,7 ) -- (-.5,\y{x}) node[pos=.5,left] {$u$};
\draw[|-|] let \p{x} = (pq) in (-1,7 ) -- (-1,\y{x}) node[pos=.5,left] {$s$};
\draw[|-|] let \p{x} = (bot) in (8.5,7 ) -- (8.5,\y{x}) node[pos=.5,right] {$n$};

\end{tikzpicture}
\end{center}
Let us assume that this state (lowest common ancestor of $\pseq{}$) occurs at depth $u$ ($u > 0$, as otherwise we would have the base case again). Now from the diagram we deduce:
\begin{align*}
\treemor{w[0:u] &}{p_0}{\theta\,\overline{\theta}}\\
\treemor{w[u:n] &}{\theta}{\pseq{}'}\\
\elems{\pseq{}'}&\cap\Acc =\emptyset
\end{align*}
Therefore, if $\dot{q}$ is the lowest common ancestor of $\pseq{}$, from the induction hypothesis (since $u < s$) we get:
\[
w \Vdash (p_0, 0) \cstep{*} (\dot{q}, u)\,\piseqs{}
\]
For some $\piseqs{}$. Moreover, from Lemma~\ref{lem_tree_cover} we deduce:
\[
w \Vdash (\dot{q}, u) \cstep{*} (q, s)\,\piseqs{}'
\]
For some $q \pin \pseq{}$ and a failure continuation $\piseqs{}'$. Finally, we use Lemma~\ref{lem_backtrack_extend} to compose these two runs into:
\[
w \Vdash (p_0, 0) \cstep{*} (\dot{q}, u)\,\piseqs{} \cstep{*} (q, s)\,\piseqs{}'\,\piseqs{}
\]
\end{proof}

%\begin{proof}
%By induction on $\length{\piseqs{}}$. For the base case, suppose $\piseqs{} = (p, s)$ where $0 \le s < \length{w}$. Then from Lemma~\ref{lem_intermediate_reject} we deduce the run:
%\[
%wz \Vdash \piseqs{} \cstep{*} \emplist
%\]
%That is, any intermediate frame spawned by $(p, s)$ is going to be rejected eventually. Moreover, Lemma~\ref{lem_tree_cover} implies that this run visits all the states of $\pseq{}$. Finally, Lemma~\ref{lem_backtrack_extend} allows this run to be extended with any failure continuation $\bar{\piseqs{}}$, giving:
%\[
%wz \Vdash \piseqs{}\,\bar{\piseqs{}} \cstep{n} \bar{\piseqs{}}
%\]
%Where $n \ge \length{\pseq{}}$. For the inductive step, consider the following inference:
%\[
%\infer{w : (p, s) \Downarrow \pseq{} \qquad\qquad w : \piseqs{}' \Downarrow \pseq{}'}
%{w : (p, s)\piseqs{}' \Downarrow \pseq{}\pseq{}'}
%\]
%From this we deduce the run:
%\[
%wz \Vdash (p, s)\piseqs{}'\bar{\piseqs{}} \cstep{n_1} \piseqs{}'\bar{\piseqs{}} \cstep{n_2} \bar{\piseqs{}}
%\]
%Where $n_1 \ge \length{\pseq{}}$ (base case) and $n_2 \ge \length{\pseq{}'}$ (I.H).
%\end{proof}

In sum, we have shown that the pumped part of the search tree grows exponentially in the size of the input, and that the \machname{} is forced to traverse all of it.

\begin{theorem}[Redos analyis soundness]\label{thm_analysis_soundness}
Let the strings $x$, $\pump{}$ and $z$ be constructed by the REDoS analysis. Let $k$ be an integer. Then the \machname{} takes at least $2^{k}$ steps on the input string $x\,y^{k}\,z$
\end{theorem}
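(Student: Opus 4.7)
The plan is to assemble one complete search tree labeled $xy^{k}z$ with root $\startnode$, to arrange that its entire frontier misses $\Acc$, and then to hand the tree to Lemma~\ref{lem_tree_traverse} to extract a lower bound on the machine run.

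First I would invoke Lemma~\ref{lemmaprefixone} to obtain $\treemor{x}{\startnode}{\pseq L\,\loopnode\,\pseq R}$ and Lemma~\ref{pumpingiteration} to obtain $\treemor{y^{k}}{\loopnode}{\pseq{k}}$ with $\length{\pseq{k}}\geq 2^{k}$. To parallel-compose the pumping tree across the full sequence $\pseq L\,\loopnode\,\pseq R$, I would first establish the routine totality lemma that for every string $w$ and every state $p$ there exists some $\pseq{}$ with $\treemor{w}{p}{\pseq{}}$; this is immediate by induction on $\length{w}$ using \textsc{Trans1}/\textsc{Trans2}, \textsc{SeqComp} and \textsc{ParComp} (a failing transition simply contributes an $\varepsilon$ leaf via \textsc{Trans2}). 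Applying this to each state of $\pseq L$ and $\pseq R$ for the string $y^{k}$ and then parallel-composing with the pumping tree yields $\treemor{y^{k}}{\pseq L\,\loopnode\,\pseq R}{\pseq{L}'\,\pseq{k}\,\pseq{R}'}$, and \textsc{SeqComp} with the prefix tree gives $\treemor{xy^{k}}{\startnode}{\pseq{L}'\,\pseq{k}\,\pseq{R}'}$. A further application of totality plus \textsc{SeqComp} extends this by $z$ through an intermediate $\treemor{z}{\pseq{L}'\,\pseq{k}\,\pseq{R}'}{\pseq{\mathrm{fail}}}$ to a full tree $\treemor{xy^{k}z}{\startnode}{\pseq{\mathrm{fail}}}$.

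Next I would show that this failure sequence avoids $\Acc$. Two applications of Lemma~\ref{dfatree} align the tree with the power DFA: $\elems{\pseq{L}'\,\pseq{k}\,\pseq{R}'}$ coincides with the multistate reached from $\Startnode$ by $xy^{k}$, and $\elems{\pseq{\mathrm{fail}}}$ coincides with the multistate reached by $xy^{k}z$. The suffix construction soundness lemma then forces $\elems{\pseq{\mathrm{fail}}}\cap\Acc=\emptyset$. Finally I would instantiate Lemma~\ref{lem_tree_traverse} with $w=xy^{k}$, $\piseqs{1}=(\startnode,0)$, $\pseq{1}=\pseq{L}'\,\pseq{k}\,\pseq{R}'$, $\pseq{2}=\pseq{\mathrm{fail}}$ and $\piseqs{2}=\varepsilon$. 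The $\Downarrow$ premise reduces to the tree judgement $\treemor{xy^{k}}{\startnode}{\pseq{L}'\,\pseq{k}\,\pseq{R}'}$ under the singleton clause of the definition, and the other premises were established above. The lemma delivers a run $xy^{k}z\Vdash(\startnode,0)\cstep{n}\varepsilon$ with $n\geq\length{\pseq{L}'\,\pseq{k}\,\pseq{R}'}\geq\length{\pseq{k}}\geq 2^{k}$.

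The main obstacle is the bookkeeping in the second paragraph: the prefix and pumping lemmas only produce the $\loopnode$ spine where the exponential blow-up lives, so the collateral subtrees rooted at states of $\pseq L$ and $\pseq R$ must be constructed separately via the totality lemma and then shown to fail on $z$ as well. This is exactly why the suffix analysis is phrased globally in terms of power DFA reachability from $\Startnode$ rather than merely from $\pumpstatthree$; combined with Lemma~\ref{dfatree} it subsumes every leaf of the composed tree in one stroke, which is what makes the width of the pumped middle column translate directly into the step lower bound of the \machname{}.
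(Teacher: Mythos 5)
Your proposal is correct and follows essentially the same route the paper intends: combine Lemma~\ref{lemmaprefixone}, Lemma~\ref{pumpingiteration}, and the suffix construction soundness lemma (via Lemma~\ref{dfatree}) to build a full search tree of width at least $2^{k}$ whose frontier avoids $\Acc$, then apply Lemma~\ref{lem_tree_traverse} to turn that width into a lower bound on machine steps. Your explicit totality lemma for the collateral subtrees under $\pseq L$ and $\pseq R$ is a detail the paper leaves implicit, but it is the right bookkeeping.
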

\begin{proof}
Follows from Lemma~\ref{lem_expn_growth} and Lemma~\ref{lem_tree_traverse}.
\end{proof}

\section{Completeness of the analysis}\label{completeness}
The analysis assumes that only a pumpable NFA can lead to an exponential runtime vulnerability. For completeness, we need to ensure that there are no other configurations that can cause such a vulnerability. Here we show that for any non-pumpable NFA, the width of any search tree is bounded from above by a polynomial. In places where an NFA is mentioned in a discussion below, a non-pumpable NFA is to be assumed (unless otherwise mentioned).

\newcommand\pcount[2]{[#1]_#2}
\newcommand\peq[1]{\overset{#1}{\sim}}
\newcommand\psim{\simeq}

%\begin{definition}
%For an ordered multi-state $\pseq{}$ and a state $p$, we define the function $\pcount{\pseq{}}{p}$ as follows:
%\begin{equation*}
%\pcount{\pseq{}}{p} =
%\begin{cases}
%1 + \pcount{\pseq{}'}{p} & \textrm{if } \pseq{} = p\pseq{}'\\
%\pcount{\pseq{}'}{p} & \textrm{if } \pseq{} = q\pseq{}' \land q \ne p
%\end{cases}
%\end{equation*}
%\end{definition}

%\begin{definition}
%We define the relation $\peq{p}$ on ordered multi-states as follows:
%\[
%\pseq{} \peq{p} \pseq{}' \Leftrightarrow \pcount{\pseq{}}{p} = \pcount{\pseq{}'}{p}
%\]
%It can be shown that $\peq{p}$ is reflexive, symmetric and transitive.
%\end{definition}

%\begin{definition}
%The relation $\psim$ is defined on ordered multi-states as follows:
%\[
%\pseq{} \psim \pseq{}' \Leftrightarrow \forall \; p \in Q \;.\; \pseq{} \peq{p} \pseq{}'
%\]
%It can be shown that $\psim$ is reflexive, symmetric and transitive.
%\end{definition}

\begin{definition}
For an ordered multi-state $\pseq{}$ and a state $p$, we define the function $\pcount{\pseq{}}{p}$ as the number of ocurrences of $p$ within $\pseq{}$. Moreover, the relations $\peq{p}$ ($p$\textit{-simulate}) and $\psim$ (\textit{simulate}) on ordered multi-states are incrementally defined as follows:
\[
\pseq{} \peq{p} \pseq{}' \Leftrightarrow \pcount{\pseq{}}{p} = \pcount{\pseq{}'}{p}
\]
\[
\pseq{} \psim \pseq{}' \Leftrightarrow \forall \; p \in Q \;.\; \pseq{} \peq{p} \pseq{}'
\]
It can be shown that both $\peq{p}$ and $\psim$ are reflexive, symmetric and transitive relations.
\end{definition}

\begin{lemma}\label{lem_psim_props}
The relation $\psim$ can be shown to satisfy the following basic properties:
\[
\pseq{} \psim \pseq{}' \Rightarrow \elems{\pseq{}} = \elems{\pseq{}'} \land \length{\pseq{}} = \length{\pseq{}'}
\]
\[
\pseq{1}\pseq{2} \psim \pseq{3}\pseq{4} \Leftrightarrow \forall \pseq{}\;.\;\pseq{1}\pseq{}\pseq{2} \psim \pseq{3}\pseq{}\pseq{4}
\]
\[
\pseq{} \psim \pseq{1}\pseq{}'\pseq{2} \land \pseq{}' \psim \pseq{}'' \Rightarrow \pseq{} \psim \pseq{1}\pseq{}''\pseq{2}
\]
\[
\pseq{1} \psim \pseq{1}' \land \pseq{2} \psim \pseq{2}' \Rightarrow \pseq{1}\pseq{2} \psim \pseq{1}'\pseq{2}'
\]
\end{lemma}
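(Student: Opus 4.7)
The plan is to reduce everything to one auxiliary fact about $\pcount{}{}$: it distributes over concatenation. A routine induction on $\pseq{1}$ using the two defining cases of $\pcount{}{}$ shows
\[
\pcount{\pseq{1}\pseq{2}}{p} = \pcount{\pseq{1}}{p} + \pcount{\pseq{2}}{p}
\]
for every state $p$. Once this identity is in hand, the four properties are all arithmetic on natural numbers, since by definition $\pseq{} \psim \pseq{}'$ says exactly that the counts $\pcount{\cdot}{p}$ agree for every $p \in Q$.

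For the first property I would argue that $p \in \elems{\pseq{}}$ iff $\pcount{\pseq{}}{p} \geq 1$, which is clearly preserved under $\psim$; and for the length equality I would use $\length{\pseq{}} = \sum_{p \in Q} \pcount{\pseq{}}{p}$ (a finite sum, since only finitely many states occur in $\pseq{}$), so termwise equality of the summands gives equality of sums. The fourth property is then immediate from two applications of distribution: $\pcount{\pseq{1}\pseq{2}}{p} = \pcount{\pseq{1}}{p} + \pcount{\pseq{2}}{p} = \pcount{\pseq{1}'}{p} + \pcount{\pseq{2}'}{p} = \pcount{\pseq{1}'\pseq{2}'}{p}$ for every $p$.

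For the second property, distribution yields
\[
\pcount{\pseq{1}\pseq{}\pseq{2}}{p} = \pcount{\pseq{1}}{p} + \pcount{\pseq{}}{p} + \pcount{\pseq{2}}{p}
\]
and similarly for $\pseq{3}\pseq{}\pseq{4}$; the middle term cancels, showing that $\pseq{1}\pseq{}\pseq{2} \psim \pseq{3}\pseq{}\pseq{4}$ holds uniformly in $\pseq{}$ exactly when $\pseq{1}\pseq{2} \psim \pseq{3}\pseq{4}$. The right-to-left direction is also obtained by specialising $\pseq{}$ to $\varepsilon$. The third property then combines these ingredients: the hypothesis $\pseq{}' \psim \pseq{}''$ gives $\pcount{\pseq{}'}{p} = \pcount{\pseq{}''}{p}$ for every $p$, hence by distribution $\pseq{1}\pseq{}'\pseq{2} \psim \pseq{1}\pseq{}''\pseq{2}$, and transitivity of $\psim$ together with $\pseq{} \psim \pseq{1}\pseq{}'\pseq{2}$ closes the argument.

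There is no real obstacle here — the proof is a sequence of bookkeeping facts about a natural-number-valued count function. The one subtlety is that $\psim$ quantifies over \emph{all} states $p \in Q$ rather than only those appearing in the sequences under consideration, but this poses no difficulty once one observes that $\pcount{\pseq{}}{p} = 0$ whenever $p \notin \elems{\pseq{}}$, so that the distribution identity and its consequences hold uniformly across all $p$.
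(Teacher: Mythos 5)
Your proof is correct. The paper offers no proof of this lemma at all (it is left as routine), and your argument — reducing everything to the distribution identity $\pcount{\pseq{1}\pseq{2}}{p} = \pcount{\pseq{1}}{p} + \pcount{\pseq{2}}{p}$ and then doing arithmetic on counts, with $\length{\pseq{}} = \sum_{p}\pcount{\pseq{}}{p}$ and $p \in \elems{\pseq{}} \Leftrightarrow \pcount{\pseq{}}{p} \geq 1$ for the first property — is exactly the standard argument the authors evidently had in mind; all four cases check out.
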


%\begin{lemma} 
%Let $a$ be a character, $\pseq{1}$, $\pseq{2}$ be ordered multi-states such that:
%\[
%\pseq{1} \psim \pseq{2}
%\]          
%\[
%\treemor{a}{\pseq{1}}{\pseq{1}'}       
%\]                                                                             
%\[
%\treemor{a}{\pseq{2}}{\pseq{2}'}
%\]
%Then $\pseq{1}' \psim \pseq{2}'$.
%\end{lemma}                      
%\begin{proof}
%By induction on the length of $\pseq{1}$.
%\end{proof}

\begin{lemma}\label{lem_triangle_psim}               
Let $w$ be an input string, $\pseq{1}$, $\pseq{2}$ be ordered multi-states such that:
\[
\pseq{1} \psim \pseq{2}
\quad
\treemor{w}{\pseq{1}}{\pseq{1}'}
\quad
\treemor{w}{\pseq{2}}{\pseq{2}'}
\]
Then $\pseq{1}' \psim \pseq{2}'$.
\end{lemma}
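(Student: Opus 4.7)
The plan is to induct on the length of $w$. Two preliminary facts about the search tree logic will be useful. First, \emph{empty-string rigidity}: any derivation $\treemor{\varepsilon}{\pseq{}}{\pseq{}'}$ forces $\pseq{}' = \pseq{}$, by an easy induction on derivation height, using the fact that Trans1 and Trans2 require a non-empty input while $\varepsilon$Seq, $\varepsilon$Par, SeqComp and ParComp all preserve the sequence. Second, \emph{single-character canonicality}: for a symbol $a$ and $\pseq{} = p_1\ldots p_n$, any derivation $\treemor{a}{\pseq{}}{\pseq{}'}$ yields $\pseq{}' = \gamma_1 \ldots \gamma_n$, where $\gamma_i$ is the ordered-NFA image of $p_i$ under $a$ (taking $\gamma_i = \varepsilon$ when no $a$-transition from $p_i$ exists). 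This is obtained by induction on the derivation: Trans1/Trans2 handle single-state cases; ParComp partitions $\pseq{}$ into contiguous parts whose canonical images concatenate appropriately; the $\varepsilon$-rules and SeqComp interleave without affecting the outcome.

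With these in hand, the base case $w = \varepsilon$ is immediate: empty-string rigidity gives $\pseq{1}' = \pseq{1} \psim \pseq{2} = \pseq{2}'$. For the inductive step, write $w = a\,v$ and decompose each derivation via SeqComp into a single $a$-step followed by a $v$-step, producing intermediate sequences $\pseq{1m}$ and $\pseq{2m}$ respectively; by single-character canonicality these intermediates are the canonical images of $\pseq{1}$ and $\pseq{2}$ under $a$.

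The key algebraic step is to show $\pseq{1m} \psim \pseq{2m}$. For any state $q$, canonicality of $\pseq{1m}$ yields
\[
\pcount{\pseq{1m}}{q} \;=\; \sum_{p} \pcount{\pseq{1}}{p} \cdot \pcount{\gamma_{p,a}}{q},
\]
where $\gamma_{p,a}$ is the fixed image of $p$ under $a$, and the analogous identity holds for $\pseq{2m}$. Since $\pcount{\pseq{1}}{p} = \pcount{\pseq{2}}{p}$ for every state $p$ by hypothesis, the two sums agree, giving $\pseq{1m} \psim \pseq{2m}$. Applying the induction hypothesis to $v$ with inputs $\pseq{1m}$ and $\pseq{2m}$ then delivers $\pseq{1}' \psim \pseq{2}'$.

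The main obstacle is establishing single-character canonicality: the tree logic admits many syntactically distinct derivations of the same judgement through various interleavings of ParComp, SeqComp, and the $\varepsilon$ rules, and showing they all collapse to the same canonical image essentially amounts to proving that the tree logic is functional in its first two arguments on inputs of length one. Once this rigidity is established, the rest of the argument is a clean combinatorial counting argument on multisets, mirroring the way the $\dfaarrow$ transition of the power DFA is defined pointwise over states.
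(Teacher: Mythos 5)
Your proof is correct and follows essentially the same route as the paper, which likewise does an outer induction on $\length{w}$ and handles the single-character case by an inner induction over $\pseq{1}$ (your canonicality-plus-counting argument is a cleaner packaging of that inner induction). The only step you use without proof beyond what the paper also elides is the factorization of a derivation for $w=av$ into an $a$-slice followed by a $v$-slice, which is a routine induction on derivations given your rigidity and canonicality lemmas.
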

\begin{proof}
Informally, $\pseq{2}$ is merely a re-ordering of $\pseq{1}$ (and vice versa). The trapezoid emanating from $\pseq{1}$ will be composed of individual search trees rooted at each constituent state of $\pseq{1}$. Therefore, the trapezoid emanating from $\pseq{2}$ will be a re-ordering of those search trees.

More formally, let $n = \length{\pseq{1}} = \length{\pseq{2}}$ (the latter equality holds since $\pseq{1} \psim \pseq{2}$). We perform an induction on $n$. The base case ($n = 1$) follows from the definition of $\triangle$ ($\pseq{1} = \pseq{2} = p$ for some $p \in Q$). For the inductive step, note that any state $q$ introduced to both $\pseq{1}$ and $\pseq{2}$ (to make them $n + 1$ in size) must be the same (in order to preserve $\pseq{1} \psim \pseq{2}$). Since the search tree rooted at $q$ is same for both $\pseq{1}$ and $\pseq{2}$ (regardless of where it appears within each of the multi-states), its contribution to $\pseq{1}'$ and $\pseq{2}'$ is the same.
\end{proof}

%\begin{proof}
%By induction on the length of $w$. The base case ($w = \empstring$) holds from the definition of $\triangle$. For the inductive step, suppose:
%\[
%\treemor{w}{\pseq{1}}{p_1 \ldots p_n} \quad (\bar{\pseq{1}}) \qquad a : p_i \mapsto \theta_i \quad (1 \le i \le n)
%\]
%\[
%\treemor{w}{\pseq{2}}{q_1 \ldots q_n} \quad (\bar{\pseq{2}}) \qquad a : q_j \mapsto \theta'_j \quad (1 \le j \le n)
%\]
%Note that $\length{\bar{\pseq{1}}} = \length{\bar{\pseq{2}}} = n$ follows from the induction hypothesis. Therefore, the definition of $\triangle$ gives:
%\[
%\pseq{1}' = \theta_1 \ldots \theta_n \qquad \pseq{2}' = \theta'_1 \ldots \theta'_n
%\]
%Now we perform an inner induction over $n$. The base case holds trivially as the projection of two equal states ($p_1 = q_1$, from the outer I.H) on the same input character ($a$) is the same (so $\theta_1 = \theta'_1$). For the inductive step, note that the two new states introduced to $\bar{\pseq{1}}$ and $\bar{\pseq{2}}$ (to make them $n + 1$ in size) must be the same (again from the outer I.H), which in turn ensures that the relation $\pseq{1}' \psim \pseq{2}'$ holds.
%\end{proof}

\begin{definition}
Given an NFA, a path $\gamma$ is a sequence of triples:
\[
(p_{0},a_{0},p_{1}) (p_{1},a_{1},p_{2}) \ldots (p_{n - 1},a_{n - 1},p_{n})
\]
where for $0 \leq i < n$, there is a transition $a_i : p_{i} \to p_{i + 1}$ in the NFA. We write $\mathrm{dom}(\gamma)$ for the first node $p_{0}$ and $\mathrm{cod}(\gamma)$ for the last node  $p_{n}$ in the path. The sequence of input symbols $a_{0}  \ldots a_{n}$ along the path is written as $\ptos\gamma$ and called the label of the path. Moreover, the set of nodes $\{p_{0}, \ldots , p_{n}\}$ along the path $\gamma$ is written as $\gnodes{\gamma}$.
\end{definition}

\begin{lemma}\label{lem_triangle_paths}
Given a tree judgement $\treemor {w}{p}{\pseq{}}$, for any state $q$ appearing in $\pseq{}$, there exists a path $\gamma$ with $\gdom{\gamma} = p$, $\gcod{\gamma} = q$ and $\glab{\gamma} = w$.
\end{lemma}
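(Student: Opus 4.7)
The plan is to proceed by induction on the derivation of the tree judgement $\treemor{w}{p}{\pseq{}}$. However, a direct induction fails at the \textsc{SeqComp} case: the premise $\treemor{w_{2}}{\pseq{2}}{\pseq{3}}$ has a full state sequence on the left, not a single state, so the induction hypothesis as stated does not apply. The fix is to first prove the following strengthening:

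\medskip

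\noindent\emph{Generalized claim.} If $\treemor{w}{\pseq{0}}{\pseq{}}$ and $\pseq{} = \pseq{1}\,q\,\pseq{2}$ for some $q$, then there exist a state $p \in \elems{\pseq{0}}$ and a path $\gamma$ with $\gdom\gamma = p$, $\gcod\gamma = q$ and $\glab\gamma = w$.

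\medskip

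\noindent The original statement is then the special case $\pseq{0} = p$. I would prove the generalized claim by induction on the derivation:
\begin{itemize}
\item \textsc{Trans1}: $\treemor{a}{p}{\pseq{}}$ from $\omor{a}{p}{\pseq{}}$. Then $q \in \elems{\pseq{}}$ gives a direct NFA transition $a : p \to q$, so $\gamma = (p,a,q)$ works.
\item \textsc{Trans2} and $\varepsilon$\textsc{Par}: $\pseq{} = \varepsilon$, so the hypothesis on $q$ is vacuous.
\item $\varepsilon$\textsc{Seq}: $\pseq{} = \pseq{0}$, so $q \in \elems{\pseq{0}}$; take $p = q$ and the empty path.
\item \textsc{ParComp}: the conclusion $\treemor{w}{\pseq{0}\,\pseq{0}'}{\pseq{2}\,\pseq{2}'}$ comes from two premises; $q$ lies in either $\pseq{2}$ or $\pseq{2}'$, so one of the two induction hypotheses directly produces the required path from a node in $\pseq{0}$ or $\pseq{0}'$ respectively.
\item \textsc{SeqComp}: given $\treemor{w_{1}}{\pseq{0}}{\pseq{m}}$ and $\treemor{w_{2}}{\pseq{m}}{\pseq{}}$, apply the IH to the second premise with $q$ to obtain $p' \in \elems{\pseq{m}}$ and a path $\gamma_{2}$ labeled $w_{2}$ from $p'$ to $q$; apply the IH to the first premise with $p'$ to obtain $p \in \elems{\pseq{0}}$ and a path $\gamma_{1}$ labeled $w_{1}$ from $p$ to $p'$. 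The concatenation $\gamma_{1}\gamma_{2}$ is a well-formed path (the compatibility condition holds at the join because $\gcod{\gamma_{1}} = p' = \gdom{\gamma_{2}}$), has label $w_{1}w_{2} = w$, and connects $p$ to $q$.
\end{itemize}

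The main obstacle is precisely the need to strengthen the statement so that the \textsc{SeqComp} case goes through: without allowing an arbitrary sequence on the top, the induction hypothesis cannot be applied to the second premise. A minor technical point is to admit the empty path (for the $\varepsilon$\textsc{Seq} case with $p = q$), which amounts to reading the definition of path with $n = -1$ and declaring $\gdom\gamma = \gcod\gamma$ in that degenerate situation; this is the only place where empty paths arise, and they are harmless for the subsequent use of the lemma.
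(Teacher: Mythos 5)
Your proof is correct; the paper states this lemma without giving an explicit proof, and your argument---strengthening the statement to an arbitrary sequence $\pseq{0}$ on top so that the \textsc{SeqComp} case goes through, then inducting on the derivation and splicing the two paths at the intermediate state $p'$---is exactly the argument the lemma requires. Your observation about admitting the empty path for the $\varepsilon$\textsc{Seq} case is a legitimate and necessary adjustment to the paper's definition of path, since the lemma as stated must produce a path with label $\varepsilon$ when $w=\varepsilon$.
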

\begin{proof}
By induction on the length of $w$.
\end{proof}

\begin{definition}\label{def_two_paths}
We write $\twopaths wpq$ ($p$ \texttt{two-paths} $q$) iff $w = aw'$ and:
\[
\exists \pseq{}, \pseq{}', \pseq{}'' \;.\; a : p \mapsto \pseq{}p_{1}\pseq{}'p_{2}\pseq{}''
\]
\[
(w' = \empstring \land p_{1} = p_{2} = q) \quad \texttt{or} \quad (\nfamor{w'}{p_{1}}{q} \land \nfamor{w'}{p_{2}}{q})
\]
\end{definition}

Definition~\ref{def_two_paths} allows us to formulate pumpability in a different notation; if we have $\twopaths{w_1}{p}{q}$ and $\nfamor{w_2}{q}{p}$ (loop), then the state $p$ is pumpable on the input string $w_1w_2$ (see Figure~\ref{pumpingmaymust}).

%\subsection{Polynomial bound}
\begin{definition}\label{def_gshared}
Let $\gamma$ be a path. We define the sets $\gshared{\gamma}$ and $\gfree{\gamma}$ as follows:
\begin{align*}
\gshared{\gamma} &= \{ p \;|\; \exists \; \gamma_1, \gamma_2 \;.\;  \gamma = \gamma_1\gamma_2\\
&\land \twopaths{\glab{\gamma_2}}{\gdom{\gamma_2}}{\gcod{\gamma_2}} \land p = \gdom{\gamma_2} \}\\
\gfree{\gamma} &= Q \setminus \gshared{\gamma}
\end{align*}
\end{definition}

Essentially, $\gshared{\gamma}$ identifies the nondeterministic states along a path $\gamma$. There are at least two paths from a given state in $\gshared{\gamma}$ to $\gcod{\gamma}$ bearing the same label ($\glab{\gamma_2}$ above, a suffix of $\glab{\gamma}$).

\begin{lemma}\label{lem_sib_restrict}
Suppose $\gamma$ is a path corresponding to a non-pumpable NFA such that $p = \gcod{\gamma}$. Then the following holds for any $w$:
\[
\treemor{w}{p}{\pseq{}} \Rightarrow \elems{\pseq{}} \subseteq \gfree{\gamma}
\]
\end{lemma}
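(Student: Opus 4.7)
The plan is to argue by contradiction. Suppose some state $q \in \elems{\pseq{}}$ also lies in $\gshared{\gamma}$; I would then derive that $q$ itself is a pumpable node of the NFA, contradicting the non-pumpability hypothesis.

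The first step is to extract $\nfamor{w}{p}{q}$ from the tree judgement $\treemor{w}{p}{\pseq{}}$ via Lemma~\ref{lem_triangle_paths}, applied at the occurrence of $q$ in $\pseq{}$. The second step is to unfold $q \in \gshared{\gamma}$ using Definition~\ref{def_gshared}: split $\gamma = \gamma_{1}\gamma_{2}$ with $q = \gdom{\gamma_{2}}$ and $\gcod{\gamma_{2}} = \gcod{\gamma} = p$, and obtain from $\twopaths{\glab{\gamma_{2}}}{q}{p}$ distinct states $p_{1}\neq p_{2}$ together with a factorisation $\glab{\gamma_{2}} = v_{1}v_{2}$ satisfying $\nfamor{v_{1}}{q}{p_{i}}$ and $\nfamor{v_{2}}{p_{i}}{p}$ for $i=1,2$.

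Third, I would stitch these into a self-loop at $q$. Setting $u_{1} = v_{1}$ and $u_{2} = v_{2}w$, composition of NFA transitions yields $\nfamor{u_{2}}{p_{i}}{q}$ for $i=1,2$, so there are two NFA runs from $q$ to $q$ reading the same label $u_{1}u_{2} = \glab{\gamma_{2}}\,w$, with distinct midpoints $p_{1}\neq p_{2}$.

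The hard part will be converting this self-loop into pumpability in the exact syntactic form of Lemma~\ref{pumpnonlin}, namely $\pump{} = \pump 1\,a\,\pump 2$ with divergence at a branching point in the sense of Definition~\ref{defbranchpoint}. To bridge this gap, I would compare the two NFA runs of $u_{1}$ position by position: they begin at the common state $q$ but end at distinct states $p_{1}\neq p_{2}$, so there is a least index $i$ at which the two runs first disagree. At that index the common preceding state $\nondetnode$ has two distinct successors $\nondeta,\nondetb$ under the same input symbol $a = u_{1}[i+1]$, giving a branching point. Taking $\pump 1$ to be the prefix of $u_{1}$ of length $i$ and $\pump 2$ to be the remaining suffix of $u_{1}$ concatenated with $u_{2}$, one obtains $\nfamor{\pump 1}{q}{\nondetnode}$ together with $\nfamor{\pump 2}{\nondeta}{q}$ and $\nfamor{\pump 2}{\nondetb}{q}$, exhibiting $q$ as a pumpable loop node. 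This contradicts the non-pumpability of the NFA and forces $q \notin \gshared{\gamma}$, so $\elems{\pseq{}} \subseteq \gfree{\gamma}$ as required.
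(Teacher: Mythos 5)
Your proof is correct and is essentially the argument the paper intends (and only sketches via Figure~\ref{fig_sib_restrict}): compose the path $p\to q$ extracted from the tree with the two distinct equal-labelled paths $q\to p$ witnessing $q\in\gshared{\gamma}$, so that $q$ would be pumpable, contradicting the hypothesis. Your final step recovering an explicit branch point $(\nondetnode,a,\{\nondeta,\nondetb\})$ is a sound elaboration, though the two stitched runs already exhibit $\twopaths{u_{1}u_{2}}{q}{q}$ directly in the form of the paper's definition.
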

\begin{proof}
From the definitions we have:
\[
\elems{\pseq{}} \subseteq Q = \gshared{\gamma} \cup \gfree{\gamma}
\]
Suppose $q \in \elems{\pseq{}} \cap \gshared{\gamma}$. Then $q \in \gshared{\gamma}$ gives:
\[
\exists \; w' \;.\; \twopaths{w'}{q}{p}
\]
However, since $q \in \elems{\pseq{}}$ we also have:
\[
w : p \nfatrans q
\]
Leading to the contradiction:
\[
\twopaths{w'w}{q}{p} \nfatrans q
\]
Therefore, it must be the case that $\elems{\pseq{}} \cap \gshared{\gamma} = \emptyset$. This leads to the conclusion:
\[
\elems{\pseq{}} \subseteq \gfree{\gamma}
\]
\end{proof}

\begin{figure}
\begin{center}
\begin{tikzpicture}
\draw (0,0) -- (3,3) -- (6,0) -- (0, 0);
\draw (3,5) -- (3,3);
\node[above] at (3,5) {$p_0$};
\node[right] at (3,3.1) {$q$};

\draw[dashed] (3,3) -- (1,0);
\node[above] at (0.9,0) {$p$};

\draw[dashed] (3,3) -- (5,0);
\node[above] at (5.1,0) {$p$};

\draw (0, -1) -- (1, 0) -- (2, -1) -- (0, -1);
\node[below] at (1, -1) {$\pseq{}$};

\draw (4, -1) -- (5, 0) -- (6, -1) -- (4, -1);
\node[below] at (5, -1) {$\pseq{}$};

\draw[|-] (6.5, -1) -- (6.5, 0);
\node[right] at (6.5, -.5) {$w$};

\draw[|-|] (6.5, 0) -- (6.5, 3);
\node[right] at (6.5, 1.5) {$w'$};
\end{tikzpicture}
\caption{Sibling restriction on $\gshared{\gamma}$}
\label{fig_sib_restrict}
\end{center}
\end{figure}

\noindent
Lemma~\ref{lem_sib_restrict} is illustrated in Figure~\ref{fig_sib_restrict}. Note that the fringes of the sibling trees rooted at the two $p$'s are identical ($\triangle$ logic is deterministic), making it impossible for either of them to contain a $q$ ($q$ would be pumpable otherwise). In other words, $q \in \gshared{\gamma}$ cannot appear again within search trees rooted at $\gcod{\gamma}$. It is this restriction on nondeterminism that leads us to the polynomial bound. However, flushing out this polynomial bound requires quite an elaborate analysis of the search tree structure, as we shall see next.

\newcommand\iseq[1]{\mathring{#1}}
\newcommand\sseq[1]{\sigma_{#1}}
\newcommand\bseq[1]{\alpha_{#1}}

\newcommand\pgrp{\triangleright}
\newcommand\pgrpx{\triangleright\triangleright}
\begin{definition}
We define the reduction $\pgrp$ on pairs of ordered multi-states according to the following rules:
\[
(q_1 \ldots q_n, \pseq{1}q\pseq{2}) \pgrp (q_1 \ldots q_iq \ldots q_n, \pseq{1}\pseq{2}) \quad (\exists \;i \;.\; q = q_i)
\]
\[
(q_1 \ldots q_n, \pseq{1}q\pseq{2}q\pseq{3}) \pgrp (q_1 \ldots q_n qq, \pseq{1}\pseq{2}\pseq{3}) \quad (\not\exists \; i \;.\; q = q_i)
\]
The reduction $\pgrp$ groups repeated states together. Given that each transition decreases the length of the second component, the reduction must terminate. We use the notation $\pgrpx$ to denote a maximal reduction:
\[
(\bseq{1}, \pseq{1}) \pgrpx (\bseq{2}, \pseq{2}) \Rightarrow \not\exists (\bseq{3}, \pseq{3}) \;.\; (\bseq{2}, \pseq{2}) \pgrp (\bseq{3}, \pseq{3})
\]
\end{definition}

\begin{lemma}\label{lem_pgrpx_props}
For a reduction $(\empseq, \pseq{}) \pgrpx (\bseq{}, \sseq{})$, the following basic properties can be shown to hold:
\[
\pseq{} \psim \bseq{}\sseq{} \tag{a}
\]
\[
\elems{\bseq{}} \cup \elems{\sseq{}} = \elems{\pseq{}} \tag{b}
\]
\[
\forall \; p \in \elems{\bseq{}} \;.\; \pcount{\bseq{}}{p} = \pcount{\pseq{}}{p} > 1 \tag{c}
\]
\[
\length{\sseq{}} = \card{\elems{\sseq{}}} \tag{d}
\]
\end{lemma}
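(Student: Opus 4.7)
The plan is to establish (a) and (b) as invariants of the $\pgrp$ reduction, and to extract (c) and (d) from the maximality condition together with an auxiliary multiplicity invariant carried along the same induction.

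For (a) and (b), I would induct on the length of a $\pgrpx$ reduction starting from $(\empseq,\pseq{})$, for which $\empseq \cdot \pseq{} = \pseq{}$ handles the base case trivially. Each $\pgrp$ step preserves the multiset underlying $\bseq{}\sseq{}$: rule 1 removes one occurrence of $q$ from $\sseq{}$ and inserts one $q$ into $\bseq{}$, while rule 2 removes two occurrences of $q$ from $\sseq{}$ and appends two copies at the end of $\bseq{}$. Hence $\pcount{\bseq{}\sseq{}}{p}$ is unchanged for every state $p$, which by Lemma~\ref{lem_psim_props} preserves $\psim$ and gives (a); taking underlying sets yields (b).

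To obtain (c), I would strengthen the induction hypothesis with the auxiliary invariant that \emph{every state occurring in $\bseq{}$ occurs there with multiplicity at least two}. This holds vacuously when $\bseq{} = \empseq$; rule 1 only increments the count of a state already present in $\bseq{}$; and rule 2 introduces a fresh state accompanied by two copies at once, so the new count is exactly $2$. The invariant therefore persists to the normal form.

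Finally, I would read (d) and the remainder of (c) off the maximality condition. If $(\bseq{},\sseq{})$ admits no further $\pgrp$ step, then no element of $\bseq{}$ can occur in $\sseq{}$ (else rule 1 would fire), and consequently no element of $\sseq{}$ may occur twice in $\sseq{}$ (else rule 2 would fire, since by the previous remark the repeated element is absent from $\bseq{}$). The second observation is exactly (d). The first gives $\elems{\bseq{}} \cap \elems{\sseq{}} = \emptyset$, so for any $p \in \elems{\bseq{}}$ part (a) yields $\pcount{\pseq{}}{p} = \pcount{\bseq{}}{p} + \pcount{\sseq{}}{p} = \pcount{\bseq{}}{p}$, and the auxiliary invariant ensures this value exceeds $1$, completing (c). The only subtle point is recognising that the strict inequality in (c) genuinely needs the auxiliary invariant and cannot be recovered from maximality alone; the rest of the argument is routine bookkeeping, and termination of $\pgrpx$ is already noted in the definition.
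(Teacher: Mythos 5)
Your proof is correct: the multiset-preservation invariant across $\pgrp$ steps gives (a) and hence (b), the auxiliary invariant that every state occurring in $\bseq{}$ occurs there with multiplicity at least two (entering only via rule~2) combined with maximality gives (c), and maximality together with the disjointness of $\elems{\bseq{}}$ and $\elems{\sseq{}}$ gives (d). The paper states this lemma without proof, and your induction on the reduction sequence is precisely the argument the definition of $\pgrp$ is set up to support, including the correct observation that the strict inequality in (c) needs the multiplicity invariant and not just maximality.
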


\newcommand\gtriangle{\bar{\bigtriangleup}}
\newcommand\gmorph[3]{#1 : #2 \mkern.5mu \bar{\bigtriangleup} \mkern.4mu #3}
\begin{definition}
We introduce an ordering variant of the search tree logic:
\[
\gmorph{w}{(\pseq{},\bseq{},\sseq{})}{(\pseq{}',\bseq{}',\sseq{}')}
\]
with the following inference rules:
\[
\infer{\treemor{a}{\pseq{1}\bseq{1}}{\pseq{2}} \quad \treemor{a}{\sseq{1}}{\pseq{3}} \quad (\empseq,\pseq{3}) \pgrpx (\bseq{2},\sseq{2})}{\gmorph{a}{(\pseq{1},\bseq{1},\sseq{1})}{(\pseq{2},\bseq{2},\sseq{2})}}
\]
\[
\infer{\gmorph{w}{(\pseq{1},\bseq{1},\sseq{1})}{(\pseq{2},\bseq{2},\sseq{2})} \qquad \gmorph{a}{(\pseq{2},\bseq{2},\sseq{2})}{(\pseq{3},\bseq{3},\sseq{3})}}{\gmorph{wa}{(\pseq{1},\bseq{1},\sseq{1})}{(\pseq{3},\bseq{3},\sseq{3})}}
\]
\end{definition}

The $\gtriangle$ semantics recursively re-arranges the search tree into $\pseq{}$, $\bseq{}$ and $\sseq{}$ components at each depth. A derivation using the $\gtriangle$ semantics may be visualized as in Figure~\ref{fig_ex_gtriangle}. Note that in this hypothetical derivation, we encounter repeated states at depth $w_1$, thus giving rise to the first non-empty $\bseq{}$ component ($\bseq{1}$). From $w_1$ to $w_1w_2$, we have non-empty $\pseq{}$ and $\sseq{}$ components. Again at depth $w_1w_2$ we can observe a non-empty $\bseq{}$ component, which is the result of the previous $\sseq{}$ component generating duplicates at this depth. The $\pseq{}$ component can be thought of as the shadow/projection of all the previous $\bseq{}$ components.

\begin{figure}
\begin{center}
\begin{tikzpicture}[scale=0.7]
\coordinate (lbot) at (0,0);
\coordinate (rbot) at (10,0);
\coordinate (top) at (5,5);
\node[above] at (top) {$p$};
\draw (lbot) -- (top) -- (rbot) -- (lbot);
\coordinate (h1) at (3,3);
\coordinate (h2) at (7,3);
\coordinate (h3) at (1,1);
\coordinate (h4) at (9,1);
\coordinate (x1) at (2,0);
\coordinate (x2) at (4,0);
\coordinate (x3) at (6,0);
%\coordinate (i1) at (intersection of h1--h2 and top--x1);
\coordinate (i2) at (intersection of h1--h2 and top--x2);
\coordinate (i3) at (intersection of h3--h4 and top--x2);
\coordinate (i4) at (intersection of h3--h4 and top--x3);
%\draw[dashed] (top) -- (i1);
%\draw (i1) -- (x1);
%\draw[dashed] (top) -- (i2);
\draw (i2) -- (x2);
\draw (h1) -- (i2);
\node[above,scale=.8] at ($ (h1) !.5! (i2) $) {$\bseq{1}$};
\draw [dotted] (i2) -- (h2);
\node[above,scale=.8] at ($ (i2) !.5! (h2) $) {$\sseq{1}$};
%\node[above,scale=.8] at ($ (i1) !.65! (i2) $) {$\ddot{q_2}$};
%\draw[dashed] (top) -- (i4);
\draw (i4) -- (x3);
\draw [dashed] (h3) -- (i3);
\node[above,scale=.8] at ($ (h3) !.5! (i3) $) {$\pseq{2}$};
\draw (i3) -- (i4);
\node[above,scale=.8] at ($ (i3) !.5! (i4) $) {$\bseq{2}$};
\draw [dotted] (i4) -- (h4);
\node[above,scale=.8] at ($ (i4) !.5! (h4) $) {$\sseq{2}$};
\draw[|-] (11,0) -- (11,1);
\node[right] at (11,.5) {$w_3$};
\draw[|-] (11,1) -- (11,3);
\node[right] at (11,2) {$w_2$};
\draw[|-|] (11,3) -- (11,5);
\node[right] at (11,4) {$w_1$};
\end{tikzpicture}
\caption{An example $\gtriangle$ derivation.}
\label{fig_ex_gtriangle}
\end{center}
\end{figure}

\begin{lemma}\label{lem_morph_relation}
Let $p$ be a state and $w$ an input string such that:
\[
\treemor{w}{p}{\pseq{}} \qquad 
\gmorph{w}{(\empseq, \empseq, p)}{(\pseq{}', \bseq{}, \sseq{})} 
\]
Then $\pseq{}'\bseq{}\sseq{} \psim \pseq{}$.
\end{lemma}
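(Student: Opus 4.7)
The plan is to proceed by induction on $\length{w}$, stepping through the $\gtriangle$ derivation and the $\triangle$ derivation in lockstep.

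Base case $w = a$: I invert the single $\gtriangle$ rule applied to $\gmorph{a}{(\empseq,\empseq,p)}{(\pseq{}',\bseq{},\sseq{})}$. The premise $\treemor{a}{\empseq}{\pseq{}'}$ forces $\pseq{}' = \empseq$ via $\varepsilon\textsc{Par}$, and the remaining premises give $\treemor{a}{p}{\pseq{3}}$ together with $(\empseq,\pseq{3})\pgrpx(\bseq{},\sseq{})$. Since the $\triangle$ logic is deterministic (as noted after Lemma~\ref{lem_sib_restrict}), $\pseq{3}$ coincides with the $\pseq{}$ from the hypothesis $\treemor{a}{p}{\pseq{}}$. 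Lemma~\ref{lem_pgrpx_props}(a) then delivers $\pseq{}\psim\bseq{}\sseq{} = \pseq{}'\bseq{}\sseq{}$.

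Inductive step $w = w'a$: I split the $\gtriangle$ derivation at the last symbol, producing an intermediate triple $(\pseq{1}',\bseq{1},\sseq{1})$ with $\gmorph{w'}{(\empseq,\empseq,p)}{(\pseq{1}',\bseq{1},\sseq{1})}$, $\treemor{a}{\pseq{1}'\bseq{1}}{\pseq{}'}$, $\treemor{a}{\sseq{1}}{\pseq{2}}$, and $(\empseq,\pseq{2})\pgrpx(\bseq{},\sseq{})$. Correspondingly, I decompose the hypothesis $\treemor{w'a}{p}{\pseq{}}$ via \textsc{SeqComp} into $\treemor{w'}{p}{\pseq{m}}$ and $\treemor{a}{\pseq{m}}{\pseq{}}$ for some $\pseq{m}$. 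By the inductive hypothesis applied with the tree $\treemor{w'}{p}{\pseq{m}}$, I obtain $\pseq{1}'\bseq{1}\sseq{1}\psim\pseq{m}$. Applying \textsc{ParComp} to $\treemor{a}{\pseq{1}'\bseq{1}}{\pseq{}'}$ and $\treemor{a}{\sseq{1}}{\pseq{2}}$ yields $\treemor{a}{\pseq{1}'\bseq{1}\sseq{1}}{\pseq{}'\pseq{2}}$. Lemma~\ref{lem_triangle_psim}, applied to the two $\triangle$ derivations over $a$ starting from the $\psim$-related multi-states $\pseq{m}$ and $\pseq{1}'\bseq{1}\sseq{1}$, gives $\pseq{}\psim\pseq{}'\pseq{2}$. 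Finally, Lemma~\ref{lem_pgrpx_props}(a) gives $\pseq{2}\psim\bseq{}\sseq{}$, and the congruence clause of Lemma~\ref{lem_psim_props} gives $\pseq{}'\pseq{2}\psim\pseq{}'\bseq{}\sseq{}$; transitivity then concludes $\pseq{}\psim\pseq{}'\bseq{}\sseq{}$.

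The main obstacle is the inversion/normalization needed to split the given $\triangle$ derivation of $w'a$ cleanly as a \textsc{SeqComp} at the final symbol. Since the $\triangle$ rules permit arbitrary interleavings of \textsc{ParComp} and \textsc{SeqComp}, one cannot assume the derivation arrives in that shape; however, determinism of $\triangle$ ensures the output $\pseq{}$ is uniquely determined by $w'a$ and $p$, so one can always exhibit an intermediate $\pseq{m}$ in canonical form (first derive $\treemor{w'}{p}{\pseq{m}}$ by combining single-symbol transitions with \textsc{ParComp}, then apply \textsc{SeqComp} with the $a$-step). Once this structural point is settled, the remainder is a routine chaining of Lemmas~\ref{lem_pgrpx_props}(a),~\ref{lem_psim_props}, and~\ref{lem_triangle_psim}.
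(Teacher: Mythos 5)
Your proof is correct and follows the route the paper's supporting lemmas are clearly designed for: induction on $\length{w}$, using determinism of the $\triangle$ logic to split the tree derivation at the last symbol, then chaining Lemma~\ref{lem_pgrpx_props}(a), Lemma~\ref{lem_triangle_psim}, and the congruence clauses of Lemma~\ref{lem_psim_props}. Your explicit handling of the inversion/normalization of the $\triangle$ derivation via determinism is a point the paper glosses over, and it is resolved correctly.
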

\begin{proof}
By induction on the length of $w$. For the base case ($w = a$), suppose $\treemor{a}{p}{\pseq{}}$. Then from the definition of $\gtriangle$ we get:
\[
\gmorph{a}{(\empseq, \empseq, p)}{(\empseq, \bseq{}, \sseq{})}
\]
Where $(\empseq, \pseq{}) \pgrpx (\bseq{}, \sseq{})$. Therefore, Lemma~\ref{lem_pgrpx_props} (a) gives $\pseq{} \psim \bseq{}\sseq{}$. For the inductive step ($w = w'a$), suppose:
\begin{align*}
&\treemor{w'}{p}{\pseq{1}} \tag{A.1}\\
&\gmorph{w'}{(\empseq, \empseq, p)}{(\pseq{1}', \bseq{1}, \sseq{1})} \tag{A.2}
\end{align*}
Then the induction hypothesis yields $\pseq{1} \psim \pseq{1}'\bseq{1}\sseq{1}$. Now let us assume:
\begin{align}
&\treemor{a}{\pseq{1}}{\pseq{2}} \tag{B.1}\\
&\treemor{a}{\pseq{1}'\bseq{1}}{\pseq{2}'}\tag{B.2}\\
&\treemor{a}{\sseq{1}}{\pseq{3'}}\tag{B.3}\\
&(\empseq, \pseq{3}') \pgrpx (\bseq{2}, \sseq{2})\tag{B.4}
\end{align}
Assumptions (A.1), (B.2) - (B.4) and the definition of $\gtriangle$ leads to:
\[
\gmorph{w'}{(\empseq, \empseq, p)}{(\pseq{2}', \bseq{2}, \sseq{2})}
\]
Moreover, assumptions (B.2), (B.3) implies
\[
\treemor{a}{\pseq{1}'\bseq{1}\sseq{1}}{\pseq{2'}\pseq{3'}}
\]
That is, we have:
\[
\pseq{1} \psim \pseq{1}'\bseq{1}\sseq{1} \tag{I.H}
\]
\[
\treemor{a}{\pseq{1}}{\pseq{2}} \tag{B.1}
\]
\[
\treemor{a}{\pseq{1}'\bseq{1}\sseq{1}}{\pseq{2'}\pseq{3'}}
\]
Applying Lemma~\ref{lem_triangle_psim} to these three relations yield $\pseq{2} \psim \pseq{2'}\pseq{3'}$. Furthermore, Lemma~\ref{lem_pgrpx_props} (a) implies (with B.4) $\pseq{3'} \psim \bseq{2}\sseq{2}$. Finally, Lemma~\ref{lem_psim_props} (c) gives $\pseq{2} \psim \pseq{2'}\bseq{2}\sseq{2}$ as required.
\end{proof}

\begin{lemma}\label{lem_gtree_props}
Let $\gamma$ be a path with $p = \gcod{\gamma}$ and $w$ an input string such that:
\[
\gmorph{w}{(\empseq, \empseq, p)}{(\pseq{}, \bseq{}, \sseq{})}
\]
Then the following properties hold:
\[
\elems{\pseq{}\bseq{}\sseq{}} \subseteq \gfree{\gamma} \tag{a}
\]
\[
\length{\sseq{}} \le \card{\gfree{\gamma}} \tag{b}
\]
\[
\length{\bseq{}\sseq{}} \le \card{\gfree{\gamma}} * o \tag{c}
\]
Where $o$ is the fan-out of the NFA.
\end{lemma}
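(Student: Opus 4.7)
The plan is to prove (a), (b), (c) simultaneously by induction on $\length{w}$, which corresponds to the derivation depth of $\gtriangle$. The key tools are Lemma~\ref{lem_morph_relation} (linking the three-component $\gtriangle$ judgement back to an ordinary tree-logic judgement), Lemma~\ref{lem_sib_restrict} (using non-pumpability to confine reachable states to $\gfree{\gamma}$), Lemma~\ref{lem_pgrpx_props} (structural properties of $\pgrpx$), and Lemma~\ref{lem_psim_props}(a) (transport of element sets and lengths across $\psim$). Throughout, $o$ is the NFA branching bound, so that any single step $\treemor{a}{\pseq{}}{\pseq{}'}$ satisfies $\length{\pseq{}'} \leq o \cdot \length{\pseq{}}$.

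For the base case $w = a$, the sole applicable rule of $\gtriangle$ forces $\treemor{a}{\empseq}{\pseq{}}$ (so $\pseq{} = \empseq$ by $\varepsilon\textsc{Par}$), $\treemor{a}{p}{\pseq{3}}$, and $(\empseq, \pseq{3}) \pgrpx (\bseq{}, \sseq{})$, with $\length{\pseq{3}} \leq o$. Part (a) then follows from Lemma~\ref{lem_sib_restrict} (which gives $\elems{\pseq{3}} \subseteq \gfree{\gamma}$) combined with Lemma~\ref{lem_pgrpx_props}(b); part (b) follows from Lemma~\ref{lem_pgrpx_props}(d) plus part (a); and part (c) follows from Lemma~\ref{lem_pgrpx_props}(a) and Lemma~\ref{lem_psim_props}(a), which yield $\length{\bseq{}\sseq{}} = \length{\pseq{3}} \leq o$.

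In the inductive step $w = w'a$, I apply the induction hypothesis to the prefix derivation $\gmorph{w'}{(\empseq,\empseq,p)}{(\pseq{1},\bseq{1},\sseq{1})}$ and read off from the last rule $\treemor{a}{\pseq{1}\bseq{1}}{\pseq{2}}$, $\treemor{a}{\sseq{1}}{\pseq{3}}$, $(\empseq,\pseq{3}) \pgrpx (\bseq{2},\sseq{2})$. For part (a), Lemma~\ref{lem_morph_relation} extracts an underlying tree derivation $\treemor{w}{p}{\pseq{}^{\ast}}$ with $\pseq{}^{\ast} \psim \pseq{2}\bseq{2}\sseq{2}$; Lemma~\ref{lem_psim_props}(a) equates their element sets, and Lemma~\ref{lem_sib_restrict} confines them to $\gfree{\gamma}$. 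Part (b) is then immediate from Lemma~\ref{lem_pgrpx_props}(d) applied to $(\empseq,\pseq{3}) \pgrpx (\bseq{2},\sseq{2})$, combined with $\elems{\sseq{2}} \subseteq \gfree{\gamma}$ from (a).

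The crux is part (c). Lemma~\ref{lem_pgrpx_props}(a) together with Lemma~\ref{lem_psim_props}(a) gives $\length{\bseq{2}\sseq{2}} = \length{\pseq{3}}$. Next, the step $\treemor{a}{\sseq{1}}{\pseq{3}}$ bounds $\length{\pseq{3}}$ above by $o \cdot \length{\sseq{1}}$, since each state in $\sseq{1}$ contributes at most $o$ successors in a single NFA transition. Finally, the inductive hypothesis (b) supplies $\length{\sseq{1}} \leq \card{\gfree{\gamma}}$, closing the chain and yielding $\length{\bseq{2}\sseq{2}} \leq o \cdot \card{\gfree{\gamma}}$. The key structural observation that makes this bound go through — and the main conceptual obstacle — is that $\sseq{2}$ is not accumulated across the whole $\gtriangle$-derivation but is freshly partitioned out of a single-step image of $\sseq{1}$ via $\pgrpx$, with all duplicates funnelled into $\bseq{2}$. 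Consequently the joint length of $\bseq{2}\sseq{2}$ depends only on the most recent predecessor $\sseq{1}$ and does not grow with $\length{w}$.
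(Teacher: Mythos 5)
Your proof is correct and follows the route the paper intends: induction on $\length{w}$, using Lemma~\ref{lem_sib_restrict} to confine the fringe to $\gfree{\gamma}$, Lemma~\ref{lem_pgrpx_props} for the structural facts about $\pgrpx$, and Lemma~\ref{lem_morph_relation} together with Lemma~\ref{lem_psim_props} to transfer element sets. The decisive observation --- that $\bseq{}\sseq{}$ at each step is the one-step image of the duplicate-free $\sseq{}$ from the previous step, so part (c) follows from part (b) and the branching bound $o$ rather than accumulating over $\length{w}$ --- is exactly the point of the $\gtriangle$ construction.
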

\begin{proof}
For property (a), suppose $\treemor{w}{p}{\pseq{}'}$. From Lemma~\ref{lem_sib_restrict} we get $\elems{\pseq{}'} \subseteq \gfree{\gamma}$. Moreover, Lemma~\ref{lem_morph_relation} gives $\pseq{}\bseq{}\sseq{} \psim \pseq{}'$. Now, Lemma~\ref{lem_psim_props} (a) gives $\pseq{}\bseq{}\sseq{} \subseteq \gfree{\gamma}$.

For property (b), note that it follows from property (a) that $\elems{\sseq{}} \subseteq \gfree{\gamma}$. From the definition of $\gtriangle$ it follows that $\exists \; \pseq{}' \;.\; (\empseq, \pseq{}') \pgrpx (\bseq{}, \sseq{})$. Therefore, from Lemma~\ref{lem_pgrpx_props} (d) we get $\length{\sseq{}} = \card{\elems{\sseq{}}} \le \card{\gfree{\gamma}}$.

For property (c), suppose $w = w'a$ (the result holds trivially for $w = \empseq$). Then from the definition of $\gtriangle$ there exist $\sseq{}', \pseq{}'$ such that:
\begin{align}
&\gmorph{w'}{(\empseq, \empseq, p)}{(\_, \_,\sseq{}')} \tag{A.1}\\
&\treemor{a}{\sseq{}'}{\pseq{}'} \tag{A.2}\\
&(\empseq, \pseq{}') \pgrpx (\bseq{}, \sseq{}) \tag{A.3}
\end{align}
From (A.2) and the structure of the NFA, we derive $\length{\pseq{}'} \le \length{\sseq{}'} * o$. Furthermore, (A.3) and Lemma~\ref{lem_pgrpx_props} (a) implies $\bseq{}\sseq{} \psim \pseq{}'$. Therefore, Lemma~\ref{lem_psim_props} (a) and property (b) above leads to $\length{\bseq{}\sseq{}} = \length{\pseq{}'} \le \length{\sseq{}'} * o \le \card{\gfree{\gamma}} * o$.
\end{proof}

\begin{lemma}\label{lem_gtriangle_fringe}
Let $w$ be an input string and p a state. Let $k$ be a constant offset into $w$ and $i$ an index such that:
\[
0 < i \le k \le \length{w}
\]
\[
\gmorph{w[0:i]}{(\empseq, \empseq, p)}{(\pseq{i}, \bseq{i}, \sseq{i})}
\]
\[
\treemor{w[i:k]}{\bseq{i}}{\bseq{(i, k)}}
\]
Then $\pseq{k} = \bseq{(1,k)} \ldots \bseq{(k - 1, k)}$
\end{lemma}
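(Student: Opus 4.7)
The plan is to proceed by induction on $k$, leveraging two facts: the $\gtriangle$ rule forces $\pseq{k+1}$ to be the unique $\triangle$-derivative of $\pseq{k}\bseq{k}$ under $w[k+1]$, and the $\triangle$ logic is deterministic (the ordered NFA has a functional transition map, so $\textsc{Trans1}$ together with $\textsc{ParComp}$ and $\textsc{SeqComp}$ preserves uniqueness of derivatives). Note that the right-hand side $\bseq{(1,k)}\cdots\bseq{(k-1,k)}$ depends only on $k$; the parameter $i$ in the hypothesis merely declares that each $\bseq{(i,k)}$ is well-defined.

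For the base case $k=1$, the $\gtriangle$ rule applied to the initial triple $(\empseq,\empseq,p)$ has as its first premise $\treemor{w[1]}{\empseq\cdot\empseq}{\pseq{1}}$, so $\varepsilon\textsc{Par}$ forces $\pseq{1}=\empseq$, which agrees with the empty concatenation on the right-hand side.

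For the inductive step, I would first observe that $\bseq{(k,k)}=\bseq{k}$, by $\varepsilon\textsc{Seq}$ applied to the empty substring $w[k\ldots k]$. Substituting into the IH gives
\[
\pseq{k}\bseq{k} \;=\; \bseq{(1,k)}\cdots\bseq{(k-1,k)}\bseq{(k,k)}.
\]
Next, for each $j\in\{1,\ldots,k\}$, the defining judgement $\treemor{w[j\ldots k+1]}{\bseq{j}}{\bseq{(j,k+1)}}$ splits, via $\textsc{SeqComp}$ together with determinism (the intermediate state is pinned down to $\bseq{(j,k)}$), into $\treemor{w[j\ldots k]}{\bseq{j}}{\bseq{(j,k)}}$ followed by $\treemor{w[k+1]}{\bseq{(j,k)}}{\bseq{(j,k+1)}}$. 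Iterated $\textsc{ParComp}$ across the $k$ blocks then yields
\[
\treemor{w[k+1]}{\pseq{k}\bseq{k}}{\bseq{(1,k+1)}\cdots\bseq{(k,k+1)}}.
\]
Matching this against the $\gtriangle$ rule at step $k+1$, determinism of $\triangle$ forces $\pseq{k+1}=\bseq{(1,k+1)}\cdots\bseq{(k,k+1)}$, closing the induction.

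The main subtle point I anticipate is the bookkeeping that aligns the $\bseq{k}$ component carried into $\pseq{k+1}$ by the $\gtriangle$ rule with the ``missing'' last entry $\bseq{(k,k)}=\bseq{k}$ of the inductive product, so that one more step of evolution produces exactly the $k$ terms needed at depth $k+1$. Once this alignment is in place, the remainder is a routine application of $\textsc{ParComp}$ and determinism of the $\triangle$ logic.
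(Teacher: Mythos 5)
Your proof is correct and follows the same route as the paper, which proves this lemma by induction on $k$ (the paper gives no further detail). The filled-in details — forcing $\pseq{1}=\varepsilon$ via $\varepsilon\textsc{Par}$ in the base case, the identification $\bseq{(k,k)}=\bseq{k}$ via $\varepsilon\textsc{Seq}$, and the use of determinism of the $\triangle$ logic together with $\textsc{SeqComp}$/$\textsc{ParComp}$ to match the iterated product against the $\pseq{k+1}$ produced by the $\gtriangle$ rule — are exactly the right bookkeeping.
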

\begin{proof}
By induction on $k$ (omitted).
\end{proof}

With reference to Figure~\ref{fig_ex_gtriangle}, Lemma~\ref{lem_gtriangle_fringe} establishes the connection between the fringe of the overall triangle and those of individual trapezoidal slices (the concatenation of the bases of the trapezoids make up the base of the overall triangle).

\begin{lemma}\label{lem_slice_props}
Let $\gamma$ be a path with $p = \gcod{\gamma}$ and $w$ an input string such that:
\[
\gmorph{w}{(\empseq, \empseq, p)}{(\pseq{}, \bseq{}, \sseq{})}
\]
Then for any $q \pin \bseq{}$, there exists a path $\gamma'$ from $p$ to $q$ such that $\gfree{\gamma\gamma'} \subsetneq \gfree{\gamma}$.
\end{lemma}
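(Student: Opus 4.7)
The plan is to construct $\gamma'$ as a path drawn from the search tree below $\gamma$, and then use non-pumpability of the NFA to show that $p = \gcod{\gamma}$ itself becomes shared under $\gamma\gamma'$ while remaining free under $\gamma$.

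First, I would observe that in the hypothesis $\gmorph{w}{(\empseq,\empseq,p)}{(\pseq{},\bseq{},\sseq{})}$, the component $\bseq{}$ is generated by a single $\pgrpx$-reduction $(\empseq,\pseq{3}) \pgrpx (\bseq{},\sseq{})$ at the very last step of the $\gtriangle$-derivation. So by Lemma~\ref{lem_pgrpx_props}(c), for any $q \in \bseq{}$ we have $\pcount{\bseq{}}{q} > 1$. Assembling the per-step $\treemor{a}{\cdot}{\cdot}$ atoms of the $\gtriangle$-derivation via $\textsc{ParComp}$ and $\textsc{SeqComp}$ yields a tree judgement $\treemor{w}{p}{\pseq{T}}$ for some fringe $\pseq{T}$, and by Lemma~\ref{lem_morph_relation} we have $\pseq{}\bseq{}\sseq{} \psim \pseq{T}$, so $\pcount{\pseq{T}}{q} \geq \pcount{\bseq{}}{q} > 1$. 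The two distinct occurrences of $q$ in $\pseq{T}$ correspond to two distinct leaves of the $\triangle$-tree; tracing each leaf back to the root $p$ and reading off NFA transitions (using Lemma~\ref{lem_triangle_paths}) gives two paths $\gamma_a, \gamma_b$ from $p$ to $q$, both labelled $w$, that must diverge at their last common ancestor. Splitting $w$ at the divergence yields $\twopaths{w}{p}{q}$.

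Now I would set $\gamma' = \gamma_a$. The decomposition $\gamma\gamma' = \gamma \cdot \gamma'$ witnesses $p \in \gshared{\gamma\gamma'}$ directly from $\twopaths{w}{p}{q}$, since $\gdom{\gamma'} = p$, $\gcod{\gamma'} = q$ and $\glab{\gamma'} = w$. To obtain the \emph{strict} inclusion $\gfree{\gamma\gamma'} \subset \gfree{\gamma}$ I need $p \notin \gshared{\gamma}$; since $p = \gcod{\gamma}$, any suffix $\gamma_2$ of $\gamma$ rooted at $p$ witnessing the contrary would supply $\twopaths{\glab{\gamma_2}}{p}{p}$, which is precisely a pumpable string for $p$, contradicting the standing non-pumpability assumption of Section~\ref{completeness}.

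It only remains to verify $\gshared{\gamma} \subseteq \gshared{\gamma\gamma'}$: any $r \in \gshared{\gamma}$ is witnessed by a suffix $\gamma_2$ of $\gamma$ with $\twopaths{\glab{\gamma_2}}{r}{\gcod{\gamma}}$, and the two diverging paths from $r$ to $\gcod{\gamma}$ can be uniformly post-composed with $\gamma'$ to give two different paths from $r$ to $\gcod{\gamma\gamma'} = q$ sharing the label $\glab{\gamma_2}\glab{\gamma'}$, so $\gamma_2\gamma'$ is a suffix of $\gamma\gamma'$ witnessing $r \in \gshared{\gamma\gamma'}$. Combined with $p \in \gshared{\gamma\gamma'} \setminus \gshared{\gamma}$, this gives $\gfree{\gamma\gamma'} \subset \gfree{\gamma}$. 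The main obstacle I anticipate is the formal passage from two leaf-occurrences of $q$ in $\pseq{T}$ to two genuinely diverging NFA paths: this is essentially a structural-induction fact about $\triangle$-derivations (distinct leaf positions are reached by distinct sequences of ordered NFA child-transitions and must branch at some internal node), but the bookkeeping to extract a clean witness for $\twopaths{w}{p}{q}$ requires care.
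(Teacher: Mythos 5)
Your proof is correct and follows essentially the same route as the paper, whose own proof is just the one-line remark that the claim follows from Lemma~\ref{lem_sib_restrict} because $q$ is repeated inside $\bseq{}$: you have filled in exactly the intended details (the repetition of $q$ in the fringe yielding $\twopaths{w}{p}{q}$ and hence $p \in \gshared{\gamma\gamma'}$ via the suffix $\gamma'$, non-pumpability of $p=\gcod{\gamma}$ giving $p \notin \gshared{\gamma}$, and monotonicity of $\gshared{\cdot}$ under post-composition giving the strict inclusion). The only caveat --- which you already flag yourself --- is that passing from two leaf occurrences of $q$ to two genuinely diverging NFA paths requires the ordered successor sequences to contain \emph{distinct} states at the divergence point; this assumption is implicit throughout the paper (cf.\ Definition~\ref{defbranchpoint}) and is not a defect specific to your argument.
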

\begin{proof}
Follows from Lemma~\ref{lem_sib_restrict}; $q$ is repeated within $\bseq{}$, making $p \in \gshared{\gamma\gamma'}$.
\end{proof}
%\begin{proof}
%Property (a) can be established with an induction over the length of $w$, similar to that of Lemma~\ref{lem_triangle_paths}.
%
%\vspace{1em}
%\noindent
%For property (b), note that it follows from property (a) that:
%\[
%w : p \nfatrans q
%\]
%Now suppose $\exists \; p' \in \gshared{\gamma}$, then from the definition of $\gshared{\gamma}$, we get:
%\[
%\exists \; w' \;.\; \twopaths{w'}{p'}{p}
%\]
%Therefore, we derive:
%\begin{eqnarray*}
%\twopaths{w'w}{p'}{p} \nfatrans q &\Rightarrow& \twopaths{w'w}{p'}{q}\\
%&\Rightarrow& p' \in \gshared{\gamma'}\\
%&\Rightarrow& \gshared{\gamma} \subseteq \gshared{\gamma'}
%\end{eqnarray*}
%Now, observe that $p \not\in \gshared{\gamma}$ by definition. Moreover, from the definition of $\gtriangle$ and Lemma~\ref{lem_pgrpx_props} (c) we get:
%\[
%\pcount{\bseq{}}{q} > 1
%\]
%Therefore, it follows from property (a) above:
%\[
%\twopaths{w}{p}{q}
%\]
%In other words, $p \in \gshared{\gamma'}$. Which leads us to conclude:
%\[
%\gshared{\gamma} \subset \gshared{\gamma'}
%\]
%Finally, applying the definition of $\gfree{\gamma}$ establishes the desired result.
%\end{proof}

\begin{lemma}\label{lem_poly_bound}
Suppose $\gamma$ is a path with $p = \gcod{\gamma}$ and $w$ an input string of length $n$ such that:
\[
\treemor{w}{p}{\pseq{}}
\]
Then the following holds:
\[
\length{\pseq{}} < k^k * o^k * n^k
\]
Where $k = \card{\gfree{\gamma}}$.
\end{lemma}
\begin{proof}
By induction on $k$.

\vspace{1em}
\noindent
\textbf{Base case - 1}: Suppose $k = 0$. Then it follows from Lemma~\ref{lem_sib_restrict} that $\length{\pseq{}} = 0$, which is within the bounds of our polynomial.

\vspace{1em}
\noindent
\textbf{Base case - 2}: Suppose $k = c$ (for some constant $c$) and:
\[
\not\exists \gamma' \;.\; \gamma' = \gamma\gamma'' \land \gfree{\gamma'} < c
\]
This means the search tree rooted at $\gcod{\gamma}$ cannot contain duplicates at any depth, for if it does, we can always find an extended path $\gamma'$ for which $\gfree{\gamma'}$ is less. This restriction immediately implies that the fringe of the search tree cannot grow beyond $c$, which is well within the bounds of our (over-estimating) polynomial ($c^c * o^c * n^c$).

\vspace{1em}
\noindent
\textbf{Inductive step}: Suppose:
\[
\gmorph{w[0:i]}{(\empseq, \empseq, p)}{(\pseq{i}, \bseq{i}, \sseq{i})}
\]
\[
\treemor{w[i:n]}{\bseq{i}}{\bseq{(i,n)}}
\]
Where $0 < i \le n$. From Lemma~\ref{lem_gtriangle_fringe} we deduce:
\begin{align*}
\pseq{n}\bseq{n}\sseq{n} = \bseq{(1,n)}\ldots\bseq{(n-1, n)}\bseq{n}\sseq{n} \tag{A}
\end{align*}
It follows from Lemma~\ref{lem_slice_props} that we can apply the induction hypothesis to each path ending in some state within an $\bseq{i}$. Therefore, we derive:
\[
\forall i \; \exists v < k \;.\; \length{\bseq{(i, n)}} < \length{\bseq{i}} * {v}^{v} * o^{v} * \length{w[i:n]}^{v}
\]
In terms of the illustration in Figure~\ref{fig_ex_gtriangle}, this statement measures the bottom edges of the trapezoids. Now, taking into account that $v < k$ and $\length{w[i:k]} < n$, we arrive at:
\[
\forall i \;.\; \length{\bseq{(i, n)}} < \length{\bseq{i}} * {k}^{k} * o^{k} * n^{k}
\]
Moreover, it follows from Lemma~\ref{lem_gtree_props} (c) that $\length{\bseq{i}} \le k * o$. Therefore, we get:
\begin{align*}
\forall i \;.\; \length{\bseq{(i, n)}} < {k}^{k + 1} * o^{k + 1} * n^{k} \tag{B}
\end{align*}
Now, we combine (A) and (B) to obtain:
\[
\length{\pseq{n}\bseq{n}\sseq{n}} < (n - 1) * {k}^{k + 1} * o^{k + 1} * n^{k} + \length{\bseq{n}\sseq{n}}
\]
Furthermore, it follows from Lemma~\ref{lem_gtree_props} (c) that:
\[
\length{\bseq{n}\sseq{n}} \le k * o < k^{k + 1} * o^{k + 1} * n^k
\]
Therefore, we get:
\[
\length{\pseq{n}\bseq{n}\sseq{n}} < {k}^{k + 1} * o^{k + 1} * n^{k + 1}
\]
Since we know $\pseq{} \psim \pseq{n}\bseq{n}\sseq{n}$ from Lemma~\ref{lem_morph_relation}, the inductive step holds.
\end{proof}

\begin{theorem}[Redos analysis completeness]\label{thm_complete}
Given an NFA with an exponential runtime vulnerability, the REDoS analysis presented in Section~\ref{analysis} will produce an attack string which triggers this behaviour on a backtracking regular expression matcher.
\end{theorem}
\begin{proof}
Lemma~\ref{lem_poly_bound} implies that for a non-pumpable NFA, the search tree width is polynomially bounded. Since $w$ is finite, the entire search space in turn becomes polynomially bounded. This suggests that only a pumpable NFA can lead to an exponentially large search space.

The analysis presented in Section~\ref{analysis} derives an attack string of the form:
\[
\begin{array}{rcll}
x &:& \startnode \ordtrans (\pseq{}\;\loopnode\;\pseq{}')\\
y &:& \elems{\pseq{}\,\loopnode} \dfaarrow \mstat{y}
& \qquad \mstat{y} \subseteq \elems{\pseq{}\,\loopnode}\\
z &:& \mstat{y} \dfaarrow {\failstate}
& \qquad \failstate \cap\Acc=\emptyset
\end{array}
\]
For completeness, we must show that all of these conditions (in addition to a pumpable node $\loopnode$) are necessary for a backtracking matcher to have an exponential runtime.

\vspace{1em}
\noindent
\textbf{Case - 1}: Suppose the analysis cannot find a prefix $x$ such that $x : \startnode \ordtrans \pseq{}\;\loopnode\;\pseq{}'$. Given that the prefix analysis is exhaustive (in that, it iterates through all the unique ordered multi-states corresponding to $Q$), it would only fail to find a prefix for some $\loopnode$ only if that node is un-reachable in the NFA. This in turn implies that a backtracking machine would not have to explore the exponentially large search tree emanatig from $\loopnode$.

\vspace{1em}
\noindent
\textbf{Case - 2}: Now suppose the analysis finds a prefix $x$ of the form:
\[
x : \startnode \ordtrans (\pseq{}\;\loopnode\;\pseq{}')
\]
The next step of the analysis is to find a pumpable string $y$ of the form:
\[
y : \elems{\pseq{}\,\loopnode} \dfaarrow \mstat{y} \qquad \mstat{y} \subseteq \elems{\pseq{}\,\loopnode}\\
\]
Here we have to establish that a pumpable string of this form must be available for any vulnerable NFA.

It follows from Lemma~\ref{lem_poly_bound} that any successful attack string would have to exploit a pumpable node in order to generate an exponentially large search space. Suppose $x\rho_i$ is such an attack string, so that:
\[
\begin{array}{rcll}
x &:& \startnode \ordtrans (\pseq{}\;\loopnode\;\pseq{}')\\
\rho_i &=& a_0a_1a_2 \ldots a_i \qquad a_i \in Z \\
\rho_i &:& \elems{\pseq{}\,\loopnode} \dfaarrow \mstat{\rho_i}
\end{array}
\]
In other words, the attack string $x\rho_i$ must allow the $\rho_i$ component to be extended while maintaining the exponential search-space expansion. Given the finiteness of $Q$, it immediately follows that:
\[
\exists j, k \; . \; (j < k) \land \mstat{\rho_k} \subseteq \mstat{\rho_j}
\]
That is, as we keep extending $\rho_i$, there must come a point where $\mstat{\rho_i}$ shrinks.

We now have:
\[
\begin{array}{rcll}
x &:& \startnode \ordtrans (\pseq{}\;\loopnode\;\pseq{}')\\
\rho_j &:& \elems{\pseq{}\,\loopnode} \dfaarrow \mstat{\rho_j} \\
a_{j + 1} \ldots a_k &:& \mstat{\rho_j} \dfaarrow \mstat{\rho_k} \qquad \mstat{\rho_k} \subseteq \mstat{\rho_j}
\end{array}
\]
Now suppose $x\rho_j : \startnode \ordtrans (\bar{\pseq{}}\;\loopnode\;\bar{\pseq{}'})$. It follows from Lemma~\ref{dfatree} and Lemma~\ref{ordtranstree} that $\elems{\bar{\pseq{}}\;\loopnode} = \mstat{\rho_j}$. Finally, allowing $y = a_{j + 1} \ldots a_k$, we arrive at:
\[
\begin{array}{rcll}
x\rho_i &:& \startnode \ordtrans (\bar{\pseq{}}\;\loopnode\;\bar{\pseq{}'})\\
y &:& \elems{\bar{\pseq{}}\;\loopnode} \dfaarrow \mstat{\rho_k} \qquad \mstat{\rho_k} \subseteq \elems{\bar{\pseq{}}\,\loopnode}
\end{array}
\]
That is, for a valid attack string to exist, there must also exist a (\textit{prefix, pumpable string}) pair that meets the criteria of the analysis. The fact that the analysis rejects some configurations not meeting the stability condition is largely irrelevant; the analyser is bound to find a suitable (\textit{prefix, pumpable string}) pair eventually (if a valid attack string configuration does indeed exist).

\vspace{1em}
\noindent
\textbf{Case - 3}: Let $x, y$ be constructed from the analysis such that:
\[
\begin{array}{rcll}
x &:& \startnode \ordtrans (\pseq{}\;\loopnode\;\pseq{}')\\
y &:& \elems{\pseq{}\,\loopnode} \dfaarrow \mstat{y}
& \qquad \mstat{y} \subseteq \elems{\pseq{}\,\loopnode}
\end{array}
\]
The final step of the analysis is to construct a failure suffix $z$ such that:
\[
z : \mstat{y} \dfaarrow {\failstate} \qquad \failstate \cap\Acc=\emptyset
\]
It is immediately clear that such a suffix must be available for a valid attack string to exist, as otherwise a backtracking machine would lead to acceptance and not have to explore the exponentially large search space constructed by the $(x, y)$ pair.
\end{proof}

\lstset{ %
  basicstyle=\tt \footnotesize,        % the size of the fonts that are used for the code
  breaklines=true,                 % sets automatic line breaking
  commentstyle=\color{green},    % comment style
  keepspaces=true,                 % keeps spaces in text, useful for keeping indentation of code (possibly needs columns=flexible)
  showspaces=false,                % show spaces everywhere adding particular underscores; it overrides 'showstringspaces'
  showstringspaces=false,          % underline spaces within strings only
  showtabs=false,                  % show tabs within strings adding particular underscores
  stepnumber=2,                    % the step between two line-numbers. If it's 1, each line will be numbered
  tabsize=2,                       % sets default tabsize to 2 spaces
  title=\lstname,                   % show the filename of files included with \lstinputlisting; also try caption instead of title
  frame=single
}
\section{Implementation}\label{implementation}
We implemented the analysis presented above in OCaml~\cite{rxxr2} (code-named \textit{RXXR}). Apart from the code used for parsing regular expressions (and some other boilerplate code), the main source modules have an almost one-to-one correspondence with the concepts discussed thus far. This relationship is illustrated in Table~\ref{fig_impl_correspondence}.

\begin{figure}[ht]
\begin{center}
\begin{tabular}{ll}
\hline
Concept (Theory)& Implementation (OCaml Module)\\
\hline
\hline
NFA& \texttt{Nfa.mli/ml}\\
$\pseq{}$& \texttt{Beta.mli/ml}\\
$\mstat{}$& \texttt{Phi.mli/ml}\\
$\twoarrow$& \texttt{Product.mli/ml}\\
$\threearrow$& \texttt{Triple.mli/ml}\\
Prefix analysis& \texttt{XAnalyser.mli/ml}\\
Pumpable analysis ($y_1$)& \texttt{Y1Analyser.mli/ml}\\
Pumpable analysis ($ay_2$)\;\;& \texttt{Y2Analyser.mli/ml}\\
Suffix analysis& \texttt{ZAnalyser.mli/ml}\\
Overall analysis& \texttt{AnalyserMain.mli/ml}\\
\hline
\hline
\end{tabular}
\end{center}
\caption{Theory to source-code correspondence}
\label{fig_impl_correspondence}
\end{figure}

Each module interface (\texttt{.mli} file) contains function definitions which directly correspond to various aspects of the analysis presented earlier. For an example, the NFA module provides the following function for querying ordered transitions:
\vspace{-1.5em}
\begin{lstlisting}
val get_transitions : Nfa.t -> int ->
                    ((char * char) * int) list;;
\end{lstlisting}
The NFA states are represented as integers. Each symbol of the input alphabet is encoded as a pair of characters, allowing a uniform representation of character classes (\texttt{[a-z]}) as well as individual characters.

The NFA used in the implementation (\texttt{Nfa.mli/ml}) contains $\varepsilon$ transitions, which were not part of the NFA formalization presented earlier. The reason for this deviation is that having $\varepsilon$ transitions allows us to preserve the structure of the regular expression within the NFA representation, which in turn preserves the order of the transitions. The correctness of the implementation is unaffected as the two forms of NFA representation are semantically equivalent (i.e. represents the same language). Only a slight mental adjustment (from ordered NFAs to $\varepsilon$-NFAs) is required to correlate the theoretical formalizations to the OCaml code. For an example, Figure~\ref{fig_beta_mli} presents the module interface for $\pseq{}$.
\small
\begin{figure}
\begin{lstlisting} 
(* internal representation of beta *)
type t;;

module BetaSet : (Set.S with type elt = t);;

(* beta with just one state *)
val make : int -> t;;

(* returns the set of states contained within this beta *)
val elems : t -> IntSet.t;;

(* calculate all one-character reachable betas *)
val advance : (Nfa.t * Word.t * t) -> (Word.t * t) list;;

(* consume all epsilon transitions while recording pumpable kleene encounters *)
val evolve : (Nfa.t * Word.t * t) -> IntSet.t -> 
               Flags.t * t * (int * t) list;;
\end{lstlisting}
\caption{\texttt{Beta.mli}}
\label{fig_beta_mli}
\end{figure}
\normalsize
The function \texttt{advance()} is utilized inside the \texttt{XAnalyser.ml} module to perfom the closure computation (i.e. compute all $\pseq{}$s reachable from the root node), whereas \texttt{evolve()} is a utility function used to work around the $\varepsilon$ transitions. The modules \texttt{(Phi / Product / Triple).mli} define similar interfaces for $\Phi, \twoarrow$ and $\threearrow$ constructs introduced in the analysis. 

The different phases of the analysis is implemented inside the corresponding analyser modules. As an example, Figure~\ref{fig_y2_mli} presents the \texttt{Y2Analyser.mli} module responsible for carrying out the analysis after the branch point ($\threearrow$ simulation).
\small
\begin{figure}
\begin{lstlisting} 
(* internal representation of the analyser *)
type t;;

(* initialize analyser instance for the specified triple and the kleene state *)
val init : (Nfa.t * Word.t * Triple.t) -> int -> t;;

(* calculate the next (y2, phi) *)
val next : t -> (Word.t * Phi.t) option;;

(* read analyser flags *)
val flags : t -> Flags.t;;
\end{lstlisting}
\caption{\texttt{Y2Analyser.mli}}
\label{fig_y2_mli}
\end{figure}
\normalsize
The internal representation of the analyser (\texttt{type t}) holds the state of the closure computation, which is initialized with an initial triple argument through the \texttt{init()} function. We defer the interested reader to module definition (\texttt{.ml}) files for further details on the implementation. 

\subsection{Evaluation data}\label{rxxr:eval_data}
The analysis was tested on two corpora of regexes. The first of these was extracted from an online regex library called \textit{RegExLib}~\cite{2012_regexlib}, which is a community-maintained regex archive; programmers from various disciplines submit their solutions to various pattern matching tasks, so that other developers can reuse these expressions for their own pattern matching needs. The second corpus was extracted from the popular intrusion detection and prevention system \textit{Snort}~\cite{2012_snort}, which contains regex-based pattern matching rules for inspecting IP packets across network boundaries. The contrasting purposes of these two corpora (one used for casual pattern matching tasks and the other used in a security critical application) allow us to get a better view of the seriousness of exponential vulnerabilities in practical regular expressions.

The regex archive for RegExLib was only available through the corresponding website~\cite{2012_regexlib}. Therefore, as the first step the expressions had to be scraped from their web source and adapted so that they can be fed into our tool. These adaptations include removing unnecessary white-space, comments and spurious line breaks. A detailed description of these adjustments as well as copies of both adjusted and un-adjusted data sets have been included with the resources linked from the RXXR distribution~\cite{rxxr2} (also including the Python script used for scraping). The regexes for Snort, on the other hand, are embedded within plain text files that define the Snort rule set. A Python script (also linked from the RXXR webpage) allowed the extraction of these regexes, and no further processing was necessary.

\subsection{Results}
\label{results}
The results of running the analysis on these two corpora of regexes are presented in Table~\ref{fig_analysis_results}. The figures  show that we can process  around 75\% of each of the corpora with the current level of syntax support. Out of these analyzable amounts, it is notable that regular expressions from the RegExLib archive use the Kleene operator more frequently (about 50\% of the analyzable expressions) than those from the Snort rule set (close to 30\%). About 11.5\% of the Kleene-based RegExLib expressions were found to have a pumpable Kleene expression as well as a suitable suffix, whereas for Snort this figure stands around 0.55\%.

\begin{figure}[h]
\begin{center}
\begin{tabular}{lrr}
\hline
& RegExLib & Snort\\
\hline
\hline
Total patterns &2992 &12499\\
Parsable &2290 &9801\\
Pumpable &159 &19\\
Vulnerable &131 &15\\
Interrupted &4 &0\\
Pruned &0 &2\\
Time &61.51 (s) &30.10 (s)\\
\hline
\end{tabular}
\end{center}
\caption{RXXR2 results - statistics}
\label{fig_analysis_results}
\end{figure}

The tool makes every attempt to analyse a given pattern, even the ones which contain non-regular constructs like backreferences. An expression $(e_1|e_2)$ may be vulnerable due to a pumpable Kleene that occurs within $e_1$, whereas $e_2$ might contain a backreference. In these situations, the analyser attempts to derive an attack string which avoids the non-regular construct. If such a non-regular construct cannot be avoided, the analysis is terminated with the \texttt{interrupted} flag.

\phantomsection\label{rxxr:unstable_example}
On certain rare occasions, search pruning is employed as an optimization. It is activated when there have been a number of unstable derivations (failing to meet $\mstat{y2} \subseteq \mstat{x}$) for a given prefix. For an example, consider the regular expression:
\[
\kleene{(\kleene{[\text{\^{}}a]}b)}[\text{\^{}}c]\{1000\}
\]
Here the Kleene expression $\kleene{(\kleene{[\text{\^{}}a]}b)}$ is pumpable for any string which contains two copies of $b$ (e.g. $bb, bab, abb, cbb \ldots$). However, if the analysis were to pick a pumpable string that \emph{does not contain the symbol $c$}, it will lead to an unstable derivation. Intuitively, the followup expression $[\text{\^{}}c]\{1000\}$ (which has a large state space) will also consume the pumpable string and introduce a new state in $\mstat{y2}$, breaking the inclusion $\mstat{y2} \subseteq \mstat{x}$. Pruning allows the analysis to attempt different variants of the pumpable string without getting stuck on a single search path where all of the pumpable strings lead to unstable (but unique) derivations (e.g. $bb, bab, baab, baaab, \dots$). Needless to say, this is an ad-hoc optimization that can be further improved with more sophisticated heuristics. Given that pruning was only triggered in two instances for the entire data set above, we believe the current heuristic (a static bound on the number of unstable derivations) is adequate. If a pruned search does not report a vulnerability, it should be re-run with a higher (or infinite) prune limit in order to obtain a conclusive result.

\subsubsection{Validation}\label{rxxr_validation}
The task of validating vulnerabilities is complicated by the fact that different regular expression implementations (Java, Python, .NET etc.) have different syntax flavours. RXXR itself is written to accept PCRE like patterns of the form \texttt{/<REGEX>/<FLAGS>} where \texttt{REGEX} contains the main expression and \texttt{FLAGS} are used to control various aspects of the matching process (e.g. whether to match multi-line input or not). Java, Python and .NET use separate library calls to configure such behavior. Moreover, they can also differ from one another in terms of the syntax allowed within the main expression. For an example, Java requires tricky escape sequences when working with meta-characters (e.g a literal backslash requires \verb|\\\\|), whereas Python is more flexible with its support for raw (un-interpreted) input strings.

For these reasons we chose Python as our main validation platform (Python's support for raw strings makes the porting relatively simple). A sample of vulnerabilities were then manually validated on other platforms (Java, .NET and PCRE). Table~\ref{fig_validation_python} illustrates how Python responds to above vulnerabilities.

\begin{figure}[ht]
\begin{center}
\begin{tabular}{lrr}
\hline
& RegExLib & Snort\\
\hline
\hline
Total vulnerabilities &131 &15\\
Successfully validated &115 &14\\
Python parsing bug &12 &0\\
Python not vulnerable &4 &1\\
\hline
\end{tabular}
\end{center}
\caption{Validation of vulnerabilities - Python}
\label{fig_validation_python}
\end{figure}

The Python scripts developed for this validation are also included with the RXXR distribution~\cite{rxxr2}, along with instructions on how to reproduce the above results. We discovered that Python was not able to compile regular expressions of the form $\kleene{(\kleene{[a-z]})}$, which is a known Python defect~\cite{python_parse_bug}. Variants of this bug affected 12 of the RegExLib vulnerabilities which we could not validate on Python. The remaining few cases were down to trivial vulnerabilities that Python manages to work around. We observed that both Python and .NET are capable of avoiding vulnerabilities in expressions like $\kleene{([a-c]|b)}d$ or $\kleene{(a|a)}b$, where the redundancies are quite obvious. Interestingly however, Java does not seem to implement any such workarounds; even when matching the expression $\kleene{(a|a)}b$ against the input string $a^n (n \sim 50)$, the JVM (Java Virtual Machine) becomes non-responsive.

\subsubsection{Sample vulnerabilities}
The vulnerabilities reported range from trivial programming errors to more complicated cases. For an example, the following regular expression is meant to validate time values in 24-hour format (from RegExLib):
\begin{verbatim}
    ^(([01][0-9]|[012][0-3]):([0-5][0-9]))*$
\end{verbatim}
Here the author has mistakenly used the Kleene operator instead of the \texttt{?} operator to suggest the presence or non-presence of the value. This pattern works perfectly for all intended inputs. However, our analysis reports that this expression is vulnerable with the pumpable string ``\verb|13:59|" and the suffix ``\verb|/|". This result gives the programmer a warning that the regular expression presents a DoS security risk if exposed to user-malleable input strings to match.

For a moderately complicated example,  consider the following regular expression (again from RegExLib):
\small
\begin{verbatim}
^([a-zA-z]:((\\([-*\.*\w+\s+\d+]+)|(\w+)\\)+)(\w+.zip)|(\w+.ZIP))$
\end{verbatim}
\normalsize
This expression is meant to validate file paths to zip archives. Our tool identifies this expression as vulnerable and generates the prefix ``\verb|z:\ |", the pumpable string ``\verb|\zzz\|" and the empty string as the suffix. This is probably an unexpected input in the author's eye, and this is another way in which our tool can be useful in that it can point out potential mis-interpretations which may have materialized as vulnerabilities.

Out of the over 12,000 patterns examined, there were two cases that failed to terminate within any reasonable amount of time. Closer inspection reveals that a pumpable Kleene expression with a vast number of states is to blame. Consider the following example (from RegExLib):
\small
\begin{verbatim}
^(([a-zA-Z0-9_\-\.]+)@([a-zA-Z0-9_\-\.]+)\.
  ([a-zA-Z]{2,5}){1,25})+
  ([;.](([a-zA-Z0-9_\-\.]+)@([a-zA-Z0-9_\-\.]+)\.
  ([a-zA-Z]{2,5}){1,25})+)*$
\end{verbatim}
\normalsize
If we change the counted expressions of the form \verb|e{1,25}| into \verb|e{1,5}|, the analyser returns immediately. This shows that the analysis itself can take a long time on certain inputs. However, such cases are extremely rare.

\subsection{Comparison to fuzzers}
REDoS analysers commonly used in practice are based on a brute-force approach known as fuzzing,  where the runtime of a pattern is tested against a set of strings. A leading example of this approach is the Microsoft's SDL Regex Fuzzer~\cite{2011_regex_fuzzer}.

As is common with most brute-force approaches, the main problem with fuzzing is that it can take a considerable amount of time to detect a vulnerability. This is especially pronounced in the case of REDoS analysis as vulnerable patterns tend to take increasing amounts of time with each iteration of testing. This property alone disqualifies fuzzing based REDoS analysers from being integrated into code-analysis tools, as their operation would impose unacceptable delays. For an example, consider the following simple pattern:
\begin{verbatim}
                   ^(a|b|ab)*c$
\end{verbatim}
Even with a lenient fuzzer configuration (ASCII only, 100 fuzzing iterations), SDL fuzzer takes 5-10 minutes to report a vulnerability on this pattern. By comparison, our analyser can process tens of thousands of patterns in less time.

Fuzzers can also miss out on vulnerabilities. For an example, consider the following two patterns:
\begin{verbatim}
                  ^.*|(a|b|ab)*c$
                  ^(a|b|ab)*c|.*$
\end{verbatim}
SDL Fuzzer reports both of these patterns as being safe. However, the non-commutative property of the alternation renders the second pattern vulnerable (as explained in Section~\ref{examples}).  Another such example is:
\begin{verbatim}
                ^(a|b|c|ab|bc)*a.*$ 
\end{verbatim}
For this pattern, only one of the pumpable strings ($bc$) can lead to an attack string, and it must not end in an $a$. Such relationships are difficult to be caught in a heuristics-based fuzzer.

Yet another problem with fuzzers is caused by the element of randomness present in their string generating algorithms. Since fuzzers are not based on any sound theory, some form of randomness is necessary in order to increase the chance of stumbling upon a valid attack string. However, this can make the fuzzer yield inconsistent results for the same pattern. Consider the following pattern for an example:
\begin{verbatim}
            (a|b)*[^c].*|(c)*(a|b|ab)*d
\end{verbatim}
The SDL fuzzer reports this pattern as being safe in most invocations, but in few cases it finds an attack string. 

Finally, the ultimate purpose of using a static analyser is to detect potential vulnerabilities upfront and lead to the corresponding fixes. Our analyser pin-points the exact pumpable Kleene expression and generates a string (pumpable string) which witnesses vulnerability, making the fixing of the error a straightforward task. This is notably in contrast to the fuzzer, which outputs a random string (mostly in hex format) that does not provide any insight into the source of the problem.

\section{Related work}\label{related}
The starting point for the present paper was the regular expression analysis RXXR \cite{analysisredos}. While that paper was aimed at a security audience, the present paper complements it by using a programming language approach inspired by type theory and logic.

Program analysis for security is by now a well established field~\cite{chess2004static}. REDoS is known in the literature as a special case of algorithmic complexity attacks~\cite{2003_dos_crosby,2006_btrack_smith}. Sugiyama and Minamide~\cite{satoshi_2014} has recently explored a static analysis for checking wheather a given regular expression matching operation can be performed in linear time. Their analysis assumes that the input string is available at analysis time, which could make it less efficient in that the analysis must be run each time prior to the actual matching operation. Nevertheless, we can imagine such an analysis acting as a white-list filter for input strings that are accepted by a system. On the other hand, our analysis can be run off-line, allowing system administrators to filter out vulnerable regular expressions much earlier.
 
Parsing Expression Grammars (PEGs) have been proposed as an alternative to regular expressions~\cite{fordpeg} that avoid their nondeterminism. In a series of tutorials~\cite{2007_regex_cox,2009_regex_cox}, Cox has argued for Thompson's lockstep matcher~\cite{1968_thompson} as a superior alternative to backtracking. Cox's implementation of the lockstep algorithm, called RE2~\cite{2007_regex_cox}, guarantees linear time matching of purely regular expressions. While RE2 also supports few extensions to basic regular expressions (such as submatch extraction), backtracking matchers still have an advantage when it comes to complex irregular constructs, backreferences~\cite{1990_aho} being the case in point. We believe that this is one of the reasons why the backtracking implementations are is still commonplace. In any case, as long as backtracking matchers remain widely deployed, the REDoS problem will also remain with us for the foreseeable future.

Backtracking is a classic application of continuations, and regular expression matchers similar to the \machname{} have been investigated in the functional programming literature~\cite{2001_danvy_defunc,1999_harper_debug,frischcardelli}. Other recent work on regular expressions in the programming language community includes regular expression inclusion~\cite{hengleinregexp}  and submatching~\cite{sulzmannsubmatch}.
  
Apart from some basic constructions like the power DFA covered in standard textbooks~\cite{hopcroftullman}, we have not explicitly relied on automata theory. Instead, we regarded the matcher as an abstract machine that can be analyzed with tools from programming language research. Specifically, the techniques in this paper are inspired by substructural logics, such as Linear Logic~\cite{girardlinear,prooftandypes} and Separation Logic~\cite{ishtiaqohearn,reynoldslicssep}. Concerning the latter, it may be instructive to compare the sharing of $w$ or absence of sharing of $\pseq{}$ in \Cref{treemor} to the connective of Separation logic. 
In a conjunction, the heap $h$ is shared:
 \[
 \infer{h \models P_{1} \qquad h \models P_{2}
 }{
 h \models P_{1} \wedge P_{2}
 }
\]
By contrast, in a separating conjunction, the heap is split into disjoint parts that are not shared:
\[
 \infer{
 h_{1} \models P_{1}
 \qquad 
  h_{2} \models P_{2}
  \qquad
  h_{1} \cap h_{2} = \emptyset
 }{
h_{1} \cup h_{2} \models P_{1} \mathbin * P_{2}
 }
 \]
 
Tree-shaped data structures have been one of the leading examples of separation logic and variations of it, such as  Context Logic~\cite{contextlogic}.
However, a difference to the search trees we have used in this paper is that the whole search tree is not actually constructed as a data structure in memory. Rather, only a diagonal cut across it is maintained at any time in the \machname{}. The whole tree does not exist in memory, but only in space \emph{and} time, so to speak. In that regard the search trees are like parse trees, which the parser only needs to construct in principle by traversing them, and not necessarily as a data structure in memory complete with details of all nodes~\cite{appelbook,2007_dragonbook}.

Even though the \machname{} is sequential, parts of the analysis are reminiscent of transition systems in  process algebras, particularly running two or more automata in parallel (\Cref{productNP,productNNP}). Seen that way, the may and must part of the analysis are analogous to the two modalities $\langle a\rangle$ and $[a]$ in Henessy-Milner logic~\cite{henessymilnerlogic}.
 
\section{Directions for further research} \label{further}
At present, the analysis constructs attack strings when there is the possibility of exponential runtime. It should be possible to extend the analysis to compute a polynomial as an upper bound for the runtime when there is no REDoS vulnerability causing exponential runtime.

The efficiency of the analyser compares favorably with that of the Microsoft SDL Regex Fuzzer~\cite{2011_regex_fuzzer}.
Given that we are computing sets of sets of states, the analysis may explore a large search space. One may take some comfort from the fact that type checking and inference for functional programming languages can have high complexity in the worst case~\cite{mairson1989deciding,seidl1994haskell} that may not manifest itself in practice. Nonetheless, we aim to revisit the design of the analysis and optimize it.

Pruning the search space may lead to improvements in efficiency. An intriguing possibility is to implement the analysis on many-core graphics hardware (GPUs). Using the right data structure representation for transitions, GPUs can efficiently explore nondeterministic transitions in parallel, as demonstrated in the iNFAnt regular expression matcher~\cite{2010_infant}.

The search tree logic (\Cref{treemor}) may have independent interest and possible connections to other substructural logics such as Linear Logic~\cite{girardlinear,prooftandypes}, Separation Logic~\cite{ishtiaqohearn,reynoldslicssep}, Lambek's syntactic calculus~\cite{lambekcalculus}, or substructural calculi for parse trees~\cite{semparsing}. Search trees are dual to parse trees in the sense that the nodes represent a disjunction rather than a conjunction.

\bibliographystyle{elsarticle-num}
%\bibliography{redos}

\end{document}